\documentclass{llncs}

\usepackage{latexsym,amssymb,amsmath,amsfonts}
\usepackage[pdftex]{graphicx}

\begin{document}

\title{Asynchronous Control-State Choreographies}

\titlerunning{Asynchronous Control-State Choreographies}

\author{Klaus-Dieter Schewe\inst{1} \and Yamine A\"{\i}t Ameur\inst{2} \and Sarah Benyagoub\inst{2}}

\authorrunning{K.-D. Schewe, Y. A\"{\i}t Ameur, S. Benyagoub}

\institute{Zhejiang University, UIUC Institute, Haining, China\\ \email{kdschewe@acm.org}
\and 
Universit\'{e} de Toulouse, IRIT/INPT-ENSEEIHT, Toulouse, France\\
\email{yamine@enseeiht.fr, sarah.benyagoub@enseeiht.fr}}

\maketitle

\begin{abstract}

Choreographies prescribe the rendez-vous synchronisation of messages in a system of communicating finite state machines. Such a system is called realisable, if the traces of the prescribed communication coincide with those of the asynchronous system of peers, where the communication channels either use FIFO queues or multiset mailboxes. In a recent article realisability was characterised by two necessary conditions that together are sufficient. A simple consequence is that realisability in the presence of a choreography becomes decidable. 
In this article we extend this work by generalising choreographies to control-state choreographies, which enable parallelism. We redefine P2P systems on grounds of control-state machines and show that a control-state choreography is equivalent to the rendez-vous compositions of its peers and that language-synchronisability coincides with synchronisability. These results are used to characterise realisability of control-state choreographies. As for the case of FSM-based choreographies we prove two necessary conditions: a sequence condition and a choice condition. Then we also show that these two conditions together are sufficient for the realisability of control-state choreographies.

\keywords{communicating system, control-state machine, asynchronous parallelism, choreography, synchronisability, peer-to-peer system, realisability}

\end{abstract}

\section{Introduction}

A peer-to-peer (P2P) system is an asynchronous system of independent peers communicating through messages. If one disregards the internal computations performed by the peers and considers only the sequences of messages sent and received, the P2P system may be seen as a system of communicating FSMs, and its semantics is defined by the traces of messages sent. In addition, stability conditions may be taken into account, i.e. only those traces are considered in which all sent messages also have been received.

Such a trace semantics can be defined in various ways, e.g. using a separate channel organised as a FIFO queue for each ordered pair of distinct peers (see e.g. \cite{brand:jacm1983,finkel:icalp2017}). In particular, messages on the same channel are received in the same order as they have been sent and no message is lost. Alternatives are the use of such FIFO queues with only a single channel for each receiver (as e.g. in \cite{basu:tcs2016}) or the organisation of the channels as multisets (see e.g. \cite{clemente:concur2014}), which corresponds to mailboxes, from which messages can be received in arbitrary order. Naturally, one may also consider the possibility of messages being lost (see e.g. \cite{chambart:concur2008}).

A common question investigated for communicating FSMs is whether the traces remain the same, if a rendez-vous (or handshake) synchronisation of (sending and receiving of) messages is considered, in which case the P2P system itself is also modelled as a FSM. This {\em synchronisability} problem has been claimed to be decidable in various publications (see e.g. \cite{basu:tcs2016}), but it was finally proven to be undecidable in general \cite{finkel:icalp2017}, though counterexamples are rather tricky.

The picture changes slightly in the presence of {\em choreographies}, i.e. FSMs that prescribe the rendez-vous synchronisation \cite{basu:popl2012}. In this case the peers are projections of a choreography, and synchronisability becomes {\em realisability} of the given choreography. As shown in \cite{schewe:foiks2020} the rendez-vous composition of the projected peers coincides with the choreography, whereas in general projections of a rendez-vous composition of arbitrary peers may not coincide with the given peers. Also the distinction between {\em language synchronisability} based only on the message traces, and {\em synchronisability} based in addition on the stable configurations reached becomes obsolete. The main result in \cite{schewe:foiks2020} shows that under these restrictive circumstances realisability can be characterised by two simple necessary conditions that both together are sufficient. Actually, a hint on the sufficiency of these conditions was already given by the compositional approach to choreographies and the associated proof of realisability \cite{benyagoub:nfm2018}. This compositional approach has been extended in \cite{benyagoub:abz2020} taking the new insights into account.

However, this final characterisation of realisable choreographies also highlights the limitations of viewing P2P systems as systems of communicating FSMs. In fact, a choreography defined by an FSM is a purely sequential system description. The purpose of a choreography is to provide a high-level prescription of how the interaction in a P2P system is supposed to proceed. Using FSMs for this is too simple to capture the needs of P2P systems even on a very high level of abstraction. For instance, a peer may send multiple messages at the same time, different peers may operate asynchronously, it may not be required that all sent messages are received, and peers may even disappear without receiving messages addressed to them. Therefore, it is necessary to provide a more sophisticated notion of choreography capturing these possible cases. It is necessary to capture the intrinsic (asynchronous) parallelism while excluding at the same time pathological behaviour that can result from arbirtrary collections of peers. Then the realisability problem can be reinvestigated on grounds of more expressive choreographies.

\paragraph{Our Contribution.}

In this article we extend our work in the conference paper \cite{schewe:foiks2020}, in which we proved necessary and sufficient conditions for the realisability of choreographies. As motivated above we replace choreographies modelled by FSMs by {\em control-state choreographies} (CSCs) modelled by control-state machines. We preserve the key idea that a choreography models the flow of messages between peers, but instead of FSMs we exploit {\em control-state machines}, which are derived from control state ASMs \cite[pp.44ff.]{boerger:2003}. The key difference is that whenever a (control) state is reached, the sending and receiving of several messages in parallel is enabled. The parallelism of sending/receiving of several messages will exploit the asynchronous semantics adopted from concurrent ASMs \cite{boerger:acin2016}.

With communicating control-state machines we can also define a generalisation of P2P systems, for which we define different notions of compositions. Using either multisets or FIFO queues as mailboxes we obtain three different types of asynchronous compositions, which we continue to call {\em peer-to-peer}, {\em mailbox} and {\em queue} composition, respectively (see \cite{schewe:foiks2020}). Using handshake communication we obtain a synchronous {\em rendez-vous} composition. In doing so we generalise the notions of {\em language synchronisability} and {\em synchronisability} to P2P systems based on control-state machines. Analogous to our previous work we show again that a CSC is equivalent to the rendez-vous composition of its projected peers. Furthermore, language synchronisability and synchronisability coincide also for P2P systems defined by CSCs. In this way our theory extension is conservative, as it preserves key features of choreography-defined P2P systems.

Finally, we investigate the generalisation of realisability for CSCs. As for the case of FSM-based choreographies we prove the necessity of two conditions: a sequence condition and a choice condition. Then we also show that these two conditions together are sufficient for the realisability of control-state choreographies. So we also obtain a generalisation of the realisability characterisation from \cite{schewe:foiks2020} to the more general control-state choreographies.

A preliminary version of this work was published in \cite{schewe:corr2020}, which is further extended here to better capture asynchronous behaviour. A shortened conference version without proofs and case study was published in \cite{schewe:medi2021}.

\paragraph{Related Work.}

The abstract view of P2P systems as communicating FSMs has already a long tradition \cite{brand:jacm1983}, and there has been a longer chain of results addressing the decidability of the (language) synchronisability problem. Decidability has been claimed by Basu et al. in \cite{basu:tcs2016} for systems with separate FIFO queues for P2P channels as well as for combined queues per receiver. For both cases Finkel and Lozes showed that (language) synchronisability is in fact undecidable \cite{finkel:icalp2017}. Assuming a mailbox semantics, i.e. multisets instead of queues, decidability can be obtained \cite{clemente:concur2014}, and this remains so even if messages can get lost \cite{chambart:concur2008}. However, the examples in \cite{finkel:icalp2017} showing that previous claims of decidability are incorrect give already a hint that if the peers are projections of their rendez-vous composition, the decability should hold. This is the case for prescribing choreographies.

These investigations apply to arbitrary systems of peers, for which an overarching FSM is composed, either using communication channels organised as queues or multisets or rendez-vous synchronisation. If the rendez-vous synchronisation is prescribed by a {\em choreography}, the picture changes, as the peers become projections of the choreography \cite{basu:popl2012}. This adds a conformity problem for choreographies \cite{basu:www2011} and extends synchronisability to the realisability of choreographies \cite{benyagoub:medi2016}. As our results in \cite{schewe:foiks2020} show, choreographies simplify the theory, as a choreography can always be regained by rendez-vous composition of its projection peers, and the problem of messages being sent but never received disappears. Furthermore, it further enables choreography repair, a problem raised in \cite{basu:fase2016} and solved in \cite{schewe:foiks2020}.

In earlier work the analogous realisability problem was studied for choreographies defined by message sequence charts. Ben-Abdallah and Leue highlight the problems that arise when a message sequence chart is underspecified \cite{benabdallah:tacas1997}, which are the same kind of problems that occur when the sequence or choice conditions are violated, while the repairs remove the underspecification. Alur et al. \cite{alur:tse2003} claim a decidability result for realisability of message sequence charts\footnote{We were not aware of their work nor did we check the correctness of the claim. Finkel and Lozes doubted the correctness of any previously claimed decidability result, but no explicit reference to the work of Alur appears in \cite{finkel:icalp2017}.}.

The characterisation of realisable choreographies in \cite{schewe:foiks2020} sets a nice end-point showing under which circumstances a high-level choreography design can be realised by an asynchronous P2P system. However, it also highlights the limitations of the notion of choreographies understood as a FSM. Parallelism is not foreseen in this approach, and this is reflected in the sequence and choice conditions in \cite[pp.273f.]{schewe:foiks2020} requiring messages to be ``non-independent''. It is not hard to imagine P2P systems that are not choreography-defined, but nonetheless specify a well-designed communication system. This motivates the investigation of more expressive notions of choreographies such as control-state choreographies that are motivated by control-state Abstract State Machines \cite[pp.44ff.]{boerger:2003}. The control-state machines used in this article are simpler as they only permit send and receive instead of arbitrary ASM rules. On the other hand they are more complex, as they integrate the theory of concurrent runs from \cite{boerger:acin2016}.

The fact that the realisability result from \cite{schewe:foiks2020} can be generalised to control-state choreographies does not imply that these are most adequate for the specification of choreographies and P2P systems. As shown in \cite{boerger:2003} by many examples control-state Abstract State Machines can be very useful on a very high level of abstraction, but they cannot replace complete state-based specifications. The extension in this article is conservative, as we preserve the focus on the message communication, whereas all other actions of the peers are disregarded. This is in line with previous work \cite{basu:tcs2016,basu:popl2012,finkel:icalp2017} and helps to clarify the realisability issue, but there are other aspects in P2P systems such as synchronisation or message content that require more information in states. Synchronisation using session types \cite{carbone:entcs2007,honda:jacm2016} allow shared agreements among the communicating peers to be integrated. Likewise, partially ordered multisets of actions \cite{guanciale:lamp2019} can be use to enhance the control of the behaviour of the peers. Bocchi et al. \cite{bocchi:lmcs2020} claim that choice should be replaced by branching controlled by conditions, which requires that such conditions must be evaluated in some state. In our view this underlines that using proper rigorous methods such as concurrent ASMs \cite{boerger:acin2016} will greatly improve the specification of P2P systems. However, for the restricted objective of this article, i.e. the generalisation of the main result in \cite{schewe:foiks2020} all these extensions are nor needed.

A constructive approach to develop realisable choreographies and consequently P2P systems was brought up in \cite{benyagoub:medi2016}. The general idea is to exploit construction operators, by means of which realisable choreographies can be built out of a primitive base. The composition operators can be specified using Event-B \cite{zoubeyr:sttt2017}, so the correctness of the construction can be verified, e.g. using the RODIN tool \cite{benyagoub:nfm2018}. This actually exploits the sufficiency of our characterisation under moderate restrictions, but cannot be used to show also necessity. On the other hand it gives already hints for choreography repair \cite{benyagoub:jsep2019}. The results in \cite{schewe:foiks2020} have been used in \cite{benyagoub:abz2020} to further strengthen the theoretical underpinnings of this correct-by-construction approach to realisable choreographies and permitted to remove unnecessary assumptions. 

Naturally, using Event-B in this context provides an open invitation to a refinement-based approach taking P2P systems defined by choreographies to communicating concurrent machines, e.g. exploiting the work on concurrent Event-B \cite{schewe:medi2018}. The proposal to support the development of concurrent systems by multiple Event-B machines with concurrent runs has been derived from concurrent ASMs \cite{boerger:acin2016}, and the introduction of messaging (as in \cite{boerger:jucs2017} for concurrent ASMs) is straightforward.

\paragraph{Organisation of the Article.}

The remainder of this article is organised as follows. Section \ref{sec:p2p} is dedicated to preliminaries, i.e. we introduce all the notions that are relevant for the work: control-state machines, P2P systems, rendez-vous, p2p, queue and mailbox semantics, and synchronisability. In Section \ref {sec:choreo} we introduce control-state choreographies and choreography-defined P2P systems, for which we show that synchronisability coincides with language-synchronisability. In Section \ref{sec:realisability} we address sufficient and necessary conditions for realisability of control-state choreographies, which gives our main result. In Section \ref{sec:ext} we briefly discuss extensions concerning infinite sets of peers and infinite sequences of messages. Finally, Section \ref{sec:fin} contains a brief summary and outlook.

\section{P2P Communication Systems and Control-State Choreographies}\label{sec:p2p}

In a P2P system we need at least peers and messages to be exchanged between them. Therefore, let $M$ and $P$ be finite, disjoint sets, elements of which are called {\em messages} and {\em peers}, respectively. Each message $m \in M$ has a unique {\em sender} $s(m) \in P$ and a unique {\em receiver} $r(m) \in P$ with $s(m) \neq r(m)$. We use the notation $i \overset{m}{\rightarrow} j$ for a message $m$ with $s(m) = i$ and $r(m) =j$. We also use the notation $!m^{i \rightarrow j}$ and $?m^{i \rightarrow j}$ for the {\em event} of sending or receiving the message $m$, respectively. Write $M_p^s$ and $M_p^r$ for the sets of messages, for which the sender or the receiver is $p$, respectively.

\subsection{Control-State Machines and P2P Systems}

Let $s(M)$ and $r(M)$ denote the sets of send and receive events defined by a set $M$ of messages. In \cite{schewe:foiks2020} we defined a P2P system over $M$ and $P$ as a family $\{ \mathcal{P}_p \}_{p \in P}$ of finite state machines\footnote{Note that the FSM $\mathcal{P}_p$ may be deterministic or non-deterministic.} (FSMs) $\mathcal{P}_p$ over an alphabet $\Sigma_p = s(M_p^s) \cup r(M_p^r)$, and by abuse of terminology $\mathcal{P}_p$ was also called a {\em peer}.

In this article we use a more general notion of P2P system based on the notion of control-state machine. This notion is derived from {\em control-state abstract state machines} \cite{boerger:2003}, but differs in that we only consider sending and receiving of messages instead of arbitrary ASM rules. 

Our extension is quite simple. Instead of a single transition relation, which captures choices among different send and receive events a control-state machine permits multiple transition relations. A peer can choose among these relations, while the different transition in such a relation are considered to be executed in parallel (if enabled).

\begin{definition}\label{def-ctl-machine}

A {\em control-state machine} $\mathcal{M}$ comprises 

\begin{itemize}

\item a set $Q$ of {\em control states} containing an {\em initial control state} $q_0 \in Q$,

\item an {\em alphabet} $\Sigma$, 



\item a finite set $\delta$ of {\em transition relations} $\tau \subseteq Q \times \Sigma \times Q$.

\end{itemize}


We write $\mathcal{M} =  (Q, q_0, \Sigma, \delta)$.

\end{definition}

We use control state machines to define P2P systems. As in \cite{schewe:foiks2020} we need a set $M$ of messages and a set $P$ of associated peers, i.e. those peers $p$ that appear as sender or receiver of the messages $m \in M$ as defined above.

\begin{definition}\label{def-p2p}

A {\em peer-to-peer system} (P2P system) over a set $M$ of messages and an associated set $P$ of peers is a family $\{ \mathcal{P}_p \}_{p \in P}$ of control-state machines 
$\mathcal{P}_p = (Q_p, q_{p,0}, \Sigma_p, \delta_p)$ with alphabets $\Sigma_p$ containing the sending and receiving events of messages in $M$ with sender or receiver $p$, respectively, i.e. $\Sigma_p \;=\; s(M_p^s) \;\cup\; r(M_p^r)$.

\end{definition}

By abuse of terminology we also call $\mathcal{P}_p$ a {\em peer}. 
The rationale behind Definition \ref{def-p2p} is that a P2P system is composed of several autonomous peers\footnote{In a more general context of concurrent systems in \cite{boerger:acin2016} peers are called {\em agents}. The special case, where messaging is added to shared locations is handled in \cite{boerger:jucs2017}.} that interact via messages. 

\begin{example}\label{bsp-p2p}

\ Let us consider a simple P2P system with four peers, for which we simple use numbers. We dispense with indexing messages, as different senders or receivers already indicate when we have different messages.

Peer 1 is to represent an author who sends an abstract to a program committee represented by peer 2. The program committee may either send back a confirmation or an immediate rejection. In the former case the author sends a paper to the committee and receives a decision in return. This gives rise to messages $1 \stackrel{a}{\rightarrow} 2$ (abstract), $2 \stackrel{ir}{\rightarrow} 1$ (immediate rejection), $2 \stackrel{c}{\rightarrow} 1$ (confirmation), $1 \stackrel{p}{\rightarrow} 2$ (paper), and $2 \stackrel{d}{\rightarrow} 1$ (decision).

Then for peer $\mathcal{P}_1$ we have the set of control-states $Q_1 = \{ 0,1,2,3,4,5 \}$ with initial control state $q_{1,0} = 0$, the event alphabet
\[ \Sigma_1 = \{ !a^{1 \rightarrow 2}, ?ir^{2 \rightarrow 1}, ?c^{2 \rightarrow 1}, !p^{1 \rightarrow 2}, ?d^{2 \rightarrow 1} \} \; , \]
and the set $\delta_1$ with two transition relations
\begin{align*}
\tau_{1,1} &= \{ (0,!a^{1 \rightarrow 2},1), (1,?ir^{2 \rightarrow 1},5) \} \quad\text{and} \\
\tau_{1,2} &= \{ (0,!a^{1 \rightarrow 2},1), (1,?c^{2 \rightarrow 1},2), (2,!p^{1 \rightarrow 2},3), (3,?c^{2 \rightarrow 1},4) \}
\end{align*}
The two transition relations capture the two alternative action sequences for an author peer. A submitted abstract can either lead to an immediate rejection---this is captured in $\tau_{1,1}$---or it can receive a confirmation, which then is followed by paper submission and receiving a final decision---this is captured in $\tau_{1,2}$. This control-state machine is llustrated in Figure \ref{fig-p2p_peer1}.

\begin{figure}[htb]
\begin{center}
\includegraphics[scale=0.6]{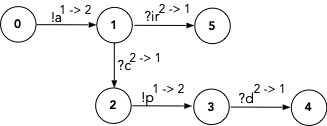}
\caption{Author Peer in P2P System\label{fig-p2p_peer1}}
\end{center}
\end{figure}

Peer 2, which represents the program committee, receives an abstract from an author (peer 1), which is either confirmed or immediately rejected. In the former case, reviewing requests are sent in parallel to two reviewers represented by peers 3 and 4. When the paper has been received from peer 1, the paper is sent for review in parallel to both reviewers. Once a review is received from peer 3 (or 4), a discussion request is sent to the other reviewer. After receiving comments from the other reviewer---this is considered to be optional---a decision is made and sent back to the author. This gives rise to additional messages $2 \stackrel{rr}{\rightarrow} 3$ and $2 \stackrel{rr}{\rightarrow} 4$ (reviewing requests), $2 \stackrel{pr}{\rightarrow} 3$ and $2 \stackrel{pr}{\rightarrow} 4$ (paper for review), $3 \stackrel{r}{\rightarrow} 2$ and $4 \stackrel{r}{\rightarrow} 2$ (reviews), $2 \stackrel{dr}{\rightarrow} 3$ and $2 \stackrel{dr}{\rightarrow} 4$ (discussion requests), and finally $3 \stackrel{c}{\rightarrow} 2$ and $4 \stackrel{c}{\rightarrow} 2$ (comments).

Peer 2 can be modelled by a control-state machine with the set $Q_2 = \{ 0, \dots, 16 \}$ of control-states, initial control state $q_{2,0} = 0$, the event alphabet
\begin{gather*}
\Sigma_2 = \{ ?a^{1 \rightarrow 2}, !ir^{2 \rightarrow 1}, !c^{2 \rightarrow 1}, ?p^{1 \rightarrow 2}, !rr^{2 \rightarrow 3}, !rr^{2 \rightarrow 4}, !pr^{2 \rightarrow 3}, !pr^{2 \rightarrow 4}, \\
?r^{3 \rightarrow 2}, ?r^{4 \rightarrow 2}, !dr^{2 \rightarrow 3}, !dr^{2 \rightarrow 4}, ?c^{3 \rightarrow 2}, ?c^{4 \rightarrow 2}, !d^{2 \rightarrow 1} \} \; ,
\end{gather*}
and the set $\delta_2$ with the following five transition relations:
\[ \tau_{2,1} = \{ (0,?a^{1 \rightarrow 2},1), (1,!ir^{2 \rightarrow 1},16) \} \]
captures again---this time considered from the angle of the programme committee peer---the case that an abstract submission by an author leads to an immediate rejection.
\begin{align*}
\tau_{2,2} &= \{ (0,?a^{1 \rightarrow 2},1), (1,!c^{2 \rightarrow 1},2), (1,!rr^{2 \rightarrow 3},3), (1,!rr^{2 \rightarrow 4},4), (2,?p^{1 \rightarrow 2},5), \\
&\qquad (5,!pr^{2 \rightarrow 3},6), (5,!pr^{2 \rightarrow 4},7), (3,?r^{3 \rightarrow 2},8), (6,?r^{3 \rightarrow 2},8), (4,?r^{4 \rightarrow 2},9), \\
&\qquad (7,?r^{4 \rightarrow 2},9), (8,!dr^{2 \rightarrow 4},10),  (9,!dr^{2 \rightarrow 3},11), (10,?c^{4 \rightarrow 2},12), (12,!d^{2 \rightarrow 1},14), \\
&\qquad (11,?c^{3 \rightarrow 2},13), (13,!d^{2 \rightarrow 1},15) \} \; , \\
\tau_{2,3} &= \{ (0,?a^{1 \rightarrow 2},1), (1,!c^{2 \rightarrow 1},2), (1,!rr^{2 \rightarrow 3},3), (1,!rr^{2 \rightarrow 4},4), (2,?p^{1 \rightarrow 2},5), \\
&\qquad (5,!pr^{2 \rightarrow 3},6), (5,!pr^{2 \rightarrow 4},7), (3,?r^{3 \rightarrow 2},8), (6,?r^{3 \rightarrow 2},8), (4,?r^{4 \rightarrow 2},9), \\
&\qquad (7,?r^{4 \rightarrow 2},9), (8,!dr^{2 \rightarrow 4},10),  (9,!dr^{2 \rightarrow 3},11), (10,?c^{4 \rightarrow 2},12), (12,!d^{2 \rightarrow 1},14), \\
&\qquad (11,!d^{2 \rightarrow 1},15) \} \; , \\
\tau_{2,4} &= \{ (0,?a^{1 \rightarrow 2},1), (1,!c^{2 \rightarrow 1},2), (1,!rr^{2 \rightarrow 3},3), (1,!rr^{2 \rightarrow 4},4), (2,?p^{1 \rightarrow 2},5), \\
&\qquad (5,!pr^{2 \rightarrow 3},6), (5,!pr^{2 \rightarrow 4},7), (3,?r^{3 \rightarrow 2},8), (6,?r^{3 \rightarrow 2},8), (4,?r^{4 \rightarrow 2},9), \\
&\qquad (7,?r^{4 \rightarrow 2},9), (8,!dr^{2 \rightarrow 4},10),  (9,!dr^{2 \rightarrow 3},11), (10,!d^{2 \rightarrow 1},14), \\
&\qquad (11,?c^{3 \rightarrow 2},13), (13,!d^{2 \rightarrow 1},15) \} \; , \\
\tau_{2,5} &= \{ (0,?a^{1 \rightarrow 2},1), (1,!c^{2 \rightarrow 1},2), (1,!rr^{2 \rightarrow 3},3), (1,!rr^{2 \rightarrow 4},4), (2,?p^{1 \rightarrow 2},5), \\
&\qquad (5,!pr^{2 \rightarrow 3},6), (5,!pr^{2 \rightarrow 4},7), (3,?r^{3 \rightarrow 2},8), (6,?r^{3 \rightarrow 2},8), (4,?r^{4 \rightarrow 2},9), \\
&\qquad (7,?r^{4 \rightarrow 2},9), (8,!dr^{2 \rightarrow 4},10),  (9,!dr^{2 \rightarrow 3},11), (10,!d^{2 \rightarrow 1},14), (11,!d^{2 \rightarrow 1},15) \} \; . \\
\end{align*}
These four transition relations have the core of the paper handling in common, i.e. the asynchronous parallel submission of a paper as well as the sending of reviewing requests to two reviewer peers, which are followed by receiving the paper submission, sending the paper to the reviewers and receiving reviews. The only alternatives occur after sending discussion requests. Comments from the other reviewer are optional, which gives rise to four choices.

This control-state machine is illustrated in Figure \ref{fig-p2p_peer2}. While the figure ressembles the presentation of FSMs by graphs it should not be confused with an FSM. An arch connecting edges stands for parallelism. In this way it becomes apparent that the transition from 3 to 8 is only possible, if in parallel the transition from 6 to 8 is executed (likely for control-states 4, 7 and 9), which means that receiving a review can only occur after the paper for review has been sent to the reviewer.

\begin{figure}[htb]
\begin{center}
\includegraphics[scale=0.6]{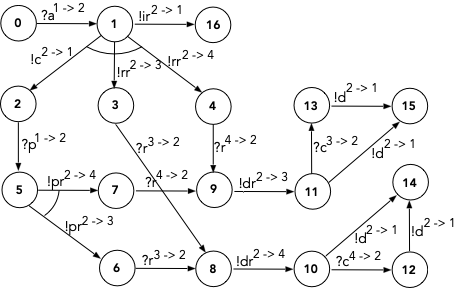}
\caption{Programme Committee Peer in P2P System\label{fig-p2p_peer2}}
\end{center}
\end{figure}

Peers 3 and 4 represent two reviewers, who first first receive a reviewing request from Peer 2 followed by the paper to be reviewed. Once a review is returned, a discussion request concerning another review is received, which may lead to comments that are sent back. These comments are considered optional. 
We can model Peer 3 by a control-state machine with control states $Q_3 = \{ 0,1,2,3,4,5 \}$, initial control-state $q_{3,0} = 0$, event alphabet
\[ \Sigma_3 = \{ ?rr^{2 \rightarrow 3}, ?pr^{2 \rightarrow 3}, !r^{3 \rightarrow 2}, ?dr^{2 \rightarrow 3}, c^{3 \rightarrow 2} \} \; , \]
and the set $\delta_3$ with two transition relations:
\begin{align*}
\tau_{3,1} &= \{ (0,?rr^{2 \rightarrow 3},1), (1,?pr^{2 \rightarrow 3},2), (2,!r^{3 \rightarrow 2},3), (3,?dr^{2 \rightarrow 3},4), (4,c^{3 \rightarrow 2},5) \} \quad\text{and}\\
\tau_{3,2} &= \{ (0,?rr^{2 \rightarrow 3},1), (1,?pr^{2 \rightarrow 3},2), (2,!r^{3 \rightarrow 2},3), (3,?dr^{2 \rightarrow 3},4) \} \; . 
\end{align*}
Peer 4 is modelled in a completely analogous way. The control-state machines are llustrated in Figure \ref{fig-p2p_peer3+4}.

\begin{figure}[htb]
\begin{center}
\includegraphics[scale=0.6]{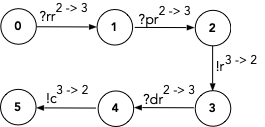} \qquad
\includegraphics[scale=0.6]{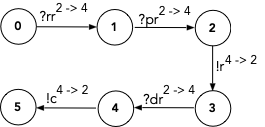}
\caption{Reviewer Peers in P2P System\label{fig-p2p_peer3+4}}
\end{center}
\end{figure}

\end{example}

For the semantics of a P2P system we have to define how the different peers interact. Informally, all peers are supposed to operate autonomously on their local states, while their interaction defines the sequences of global states. We will therefore first define the notion of state.

\begin{definition}\label{def-global-state}

A {\em state} of a P2P system $\{ \mathcal{P}_p \}_{p \in P}$ consists of

\begin{itemize}

\item a {\em combined control state} $ctl$ defined by a function $ctl: P \rightarrow \bigcup_{p \in P} Q_p$ with $ctl(p) \subseteq Q_p$,


\item a {\em message pool} $B$, which is defined using one of the following four alternatives:

\begin{description}

\item[\textbf{rendez-vous.}] $B = \emptyset$;

\item[\textbf{p2p.}] $B$ is a family of FIFO queues $c_{i,j}$ indexed by pairs $(i,j)$ with $i,j \in P, i \neq j$, such that all entries in $c_{i,j}$ are messages $i \stackrel{m}{\rightarrow} j$ from sender $i$ to receiver $j$;

\item[\textbf{queue.}] $B$ is a family of FIFO queues $c_j$ indexed by peers $j \in P$, such that all entries in $c_j$ are messages $i \stackrel{m}{\rightarrow} j$ with receiver $j$;

\item[\textbf{mailbox.}] $B$ is a family of multisets $b_j$ indexed by peers $j \in P$, such that all entries in $b_j$ are messages $i \stackrel{m}{\rightarrow} j$ with receiver $j$.

\end{description}

\end{itemize}

An {\em initial state} is a state with $ctl(p) = q_{p,0}$ for all $p \in P$ and an empty message pool $B$, i.e. all queues $c_{i,j}$, $c_j$ or multisets $b_j$ for $i,j \in P$ are empty.

\end{definition}

Depending on the chosen alternative for the message pool we say that the P2P system has a p2p, queue, mailbox or rendez-vous semantics\footnote{Note that it would also be possible to use a separate mailbox for each channel defined by a pair $(i,j)$ with $i,j \in P, i \neq j$. However, as it is always possible to access all elements in a multiset, this will not make any difference. For the case of FSM-based P2P systems this was formally shown in \cite[Prop.2]{schewe:foiks2020}.}.

The rationale behind Definition \ref{def-global-state} is to understand the semantics of a P2P system as an asynchronous concurrent system that progresses by state transitions. The p2p, queue and mailbox semantics capture how systems really run; they only differ in the way messages are handled by the peers. Clearly, mailboxes provide the largest flexibility, whereas the queue semantics is the most restrictive one. However, as examples in \cite{finkel:icalp2017} show there can be P2P systems that behave oddly and hardly reflect behaviour desired from an application point of view. For this purpose we also foresee the restrictive rendez-vous semantics. With FSMs instead of control-state machines (as in \cite{finkel:icalp2017,schewe:foiks2020}) the rendez-vous semantics enforces a strictly sequential behaviour. With the use of control-state machines this behaviour is much more relaxed, and permits parallelism.

A {\em location} of a state is given by a function symbol $f$ and a $n$-tuple $\bar{a}$ of values, where $n$ is the arity of $f$. For the states defined in Definition \ref{def-global-state} the function symbols are $ctl$ of arity 1, 
and $channel$ of arity 2 in p2p semantics, $queue$ of arity 1 in queue semantics, or $mailbox$ of arity 1 in mailbox semantics---for rendez-vous semantics no further function symbol is needed. We write $val_S(\ell)$ to denote the value of the location $\ell$ in state $S$, i.e. $val_S((ctl,p)) = ctl(p)$ is defined for peers $p \in P$. 
For the other locations we have:

\begin{itemize}

\item In p2p semantics $val_S((channel,(i,j))) = c_{i,j}$ is defined for $i,j \in P$ with $i \neq j$.

\item In queue semantics $val_S((queue,j)) = c_j$ is defined for $j \in P$.

\item In mailbox semantics $val_S((mailbox,j)) = b_j$ is defined for $j \in P$.

\end{itemize}

For any state $S = (ctl,B)$ 
and any peer $p \in P$ we obtain a {\em projection} 
$S_p = (ctl(p), B_p)$, where $B_p$ is the family of queues $\{ c_{i,p} \mid i \in P, i \neq p \}$ in p2p semantics, the queue $c_p$ in queue semantics, the multiset $b_p$ in mailbox semantics, and $\emptyset$ in rendez-vous semantics. A projected state $S_p$ is also referred to as {\em local state} of the peer $p$.

\begin{definition}\label{def-enabled}

The set $E_p(S)$ of {\em enabled transitions} of a peer $p \in P$ in a state 
$S = (ctl,B)$ is defined by the following cases:

\begin{itemize}


\item In p2p, queue or mailbox semantics a sending transition $t = (q_t, !m^{p \rightarrow j}, q_t^\prime) \in \tau \in \delta_p$ of a peer $\mathcal{P}_p$ is enabled in $S$ iff $q_t \in ctl(p)$ holds.


\item In p2p semantics a receiving transition $t = (q_t, ?m^{i \rightarrow p}, q_t^\prime) \in \tau \in \delta_p$ of a peer $\mathcal{P}_p$ is enabled in $S$ iff $q_t \in ctl(p)$ holds, and the front element in the queue $c_{i,p}$ is the message $i \stackrel{m}{\rightarrow} p$.


\item In queue semantics a receiving transition $t = (q_t, ?m^{i \rightarrow p}, q_t^\prime) \in \tau \in \delta_p$ of a peer $\mathcal{P}_p$ is enabled in $S$ iff $q_t \in ctl(p)$ holds, and the front element in the queue $c_p$ is the message $i \stackrel{m}{\rightarrow} p$.


\item In mailbox semantics a receiving transition $t = (q_t, ?m^{i \rightarrow p}, q_t^\prime) \in \tau \in \delta_p$ of a peer $\mathcal{P}_p$ is enabled in $S$ iff $q_t \in ctl(p)$ holds, and the message $i \stackrel{m}{\rightarrow} p$ appears in the multiset $b_p$.


\item In rendez-vous semantics a sending transition $t_i = (q_i, !m^{i \rightarrow j}, q_i^\prime) \in \tau_i \in \delta_i$ and a receiving transition $t_j = (q_j, ?m^{i \rightarrow j}, q_j^\prime) \in \tau_j \in \delta_j$ of a peer $\mathcal{P}_j$ are enabled simultaneously in $S$ iff $q_i \in ctl(i)$ and $q_j \in ctl(j)$ hold.

\end{itemize}

\end{definition}

Each enabled transition $t \in E_p(S)$ yields an update set of the state $S$. In general, an {\em update} is a pair $(\ell, v)$ comprising a location $\ell$ and a value $v$, and an {\em update set} is a set of such updates. 

\begin{definition}\label{def-transition-update}

For an enabled transition $t \in E_p(S)$ of a peer $p$ in state $S$ the {\em yielded update set} $\Delta_t(S)$ is defined as follows:

\begin{itemize}


\item If $t = (q_t, !m^{p \rightarrow j}, q_t^\prime)$ is a sending transition or $t = (q_t, ?m^{i \rightarrow p}, q_t^\prime)$ is a receiving transition, then $((ctl,p),val_S(ctl,p) - \{ q_t \} \cup \{ q_t^\prime \}) \in \Delta_t(S)$.


\item If $t = (q_t, !m^{p \rightarrow j}, q_t^\prime)$ is a sending transition, then in p2p semantics (using $\oplus$ to denote concatenation)
\[ ( (channel,(p,j)), val_S((channel,(p,j))) \oplus [p \stackrel{m}{\rightarrow} j]) \in \Delta_t(S) \; , \]
in queue semantics (using again $\oplus$ for concatenation)
\[ ( (queue,j), val_S((queue,j)) \oplus [p \stackrel{m}{\rightarrow} j]) \in \Delta_t(S) \; , \]
and in mailbox semantics (using $\uplus$ to denote multiset union)
\[ ( (mailbox,j), val_S((mailbox,j)) \uplus [p \stackrel{m}{\rightarrow} j]) \in \Delta_t(S) \; . \]


\item If $t = (q_t, ?m^{i \rightarrow p}, q_t^\prime)$ is a receiving transition, then in p2p semantics
\[ ( (channel,(i,p)), c_{i,p} ) \in \Delta_t(S) \;\text{for}\; val_S((channel,(i,p))) = [i \stackrel{m}{\rightarrow} p] \oplus c_{i,p} \; , \]
in queue semantics $( (queue,p), c_p ) \in \Delta_t(S)$
for $val_S((queue,p)) = [i \stackrel{m}{\rightarrow} p] \oplus c_p$, and in mailbox semantics $( (mailbox,p), val_S((mailbox,p)) - \langle i \stackrel{m}{\rightarrow} p \rangle  ) \in \Delta_t(S)$.

\item No other updates are in $\Delta_t(S)$.

\end{itemize}

\end{definition}

This notion of yielded update set generalises naturally to sets of transitions and to peers in a P2P system.

\begin{definition}\label{def-update-set}

Let $S$ be a state of a P2P system $\{ \mathcal{P}_p \}_{p \in P}$. For each set $\tau \in \delta_p$ of transitions the {\em update set yielded by $\tau$ in $S$} is $\Delta_\tau(S) = \bigcup_{t \in \tau, t \in E_p(S)} \Delta_t(S)$. The {\em set of update sets} of the peer $p \in P$ in state $S$ is $\boldsymbol{\Delta}_p(S) = \{ \Delta_\tau(S) \mid \tau \in \delta_p \}$.

\end{definition}

The definition highlights the nature of the sets of transition sets defining peers. Transitions $t$ in the same set $\tau \in \delta_p$ yield updates that are to be applied in parallel, whereas the different transition sets $\tau$ provide choices. As we build update sets by unions, we have to handle updates to the same location with function symbol $channel$, $queue$ or $mailbox$, respectively. For receiving transitions in p2p or queue semantics such updates request a particular front element in a FIFO queue, so two such updates can only occur in parallel, if they require the same front element. For receiving transitions in mailbox semantics the messages are removed according to their multiplicity. For sending transitions in p2p and queue semantics different messages are appended to the queues in arbitrary order, and they are cumulatively added to the multisets in mailbox semantics. In doing so, we actually realise {\em partial updates} as defined in \cite{schewe:ejc2011}.

Furthermore, in p2p, queue and mailbox semantics update sets yielded by different peers are handled independently from each other, whereas in rendez-vous semantics we always have to consider pairs of a sending and a receiving transition. We will take care of this dependency in the definition of runs of P2P systems.

\begin{example}\label{bsp-run}

\ We continue our previous Example \ref{bsp-p2p}. In rendez-vous semantics we have a start state $ctl_0$ with $ctl_0(1) = \{ 0 \}$, $ctl_0(2) = \{ 0 \}$, $ctl_0(3) = \{ 0 \}$ and $ctl_0(4) = \{ 0 \}$, so the sending transition $(0, !a^{1 \rightarrow 2}, 1)$ of Peer 1 and the receiving transition $(0, ?a^{1 \rightarrow 2}, 1)$ of Peer 2 are simultaneously enabled. The update yields a new state $ctl_1$ with $ctl_1(1) = \{ 1 \}$, $ctl_1(2) = \{ 1 \}$, $ctl_1(3) = \{ 0 \}$ and $ctl_1(4) = \{ 0 \}$. In this state the sending transition $(1, !ir^{2 \rightarrow 1}, 16)$ of Peer 2 and the receiving transition $(1, ?ir^{2 \rightarrow 1}, 5)$ of Peer 5 are simultaneously enabled, and the update yields a final state $ctl$ with $ctl_1(1) = \{ 5 \}$, $ctl_1(2) = \{ 16 \}$, $ctl_1(3) = \{ 0 \}$ and $ctl_1(4) = \{ 0 \}$. This run corresponds to the case, where an abstract is immediately rejected.

Alternatively, the sending transition $(1, !rr^{2 \rightarrow 3}, 3)$ of Peer 2 and the receiving transition $(0, ?rr^{2 \rightarrow 3}, 1)$ of Peer 3 are simultaneously enabled. The update yields a new state $ctl_2$ with $ctl_2(1) = \{ 1 \}$, $ctl_2(2) = \{ 1, 3 \}$, $ctl_2(3) = \{ 1 \}$ and $ctl_2(4) = \{ 0 \}$. Note that $ctl_2(2)$ contains two control-states, because only one of the parallel branches has been considered. In this state the sending transition $(1, !rr^{2 \rightarrow 4}, 4)$ of Peer 2 and the receiving transition $(0, ?rr^{2 \rightarrow 4}, 1)$ of Peer 4 are simultaneously enabled. The update yields a new state $ctl_3$ with $ctl_3(1) = \{ 1 \}$, $ctl_3(2) = \{ 1, 3, 4 \}$, $ctl_3(3) = \{ 1 \}$ and $ctl_3(4) = \{ 1 \}$. We can continue the run with the simultaneously enabled transitions $(1, !c^{2 \rightarrow 1}, 2)$ of Peer 2 and transition $(1, ?c^{2 \rightarrow 1}, 2)$ of Peer 1, which leads to a state $ctl_4$ with $ctl_4(1) = \{ 2 \}$, $ctl_4(2) = \{ 2, 3, 4 \}$, $ctl_4(3) = \{ 1 \}$ and $ctl_4(4) = \{ 1 \}$. We can further continue the run with the simultaneously enabled transitions $(2, !p^{1 \rightarrow 2}, 3)$ of Peer 1 and transition $(2, ?p^{1 \rightarrow 2}, 5)$ of Peer 2, which leads to a state $ctl_5$ with $ctl_5(1) = \{ 3 \}$, $ctl_5(2) = \{ 3, 4, 5 \}$, $ctl_5(3) = \{ 1 \}$ and $ctl_5(4) = \{ 1 \}$.

Omitting some steps we can reach a state $ctl_9$ with $ctl_9(1) = \{ 3 \}$, $ctl_9(2) = \{ 8,9 \}$, $ctl_9(3) = \{ 3 \}$ and $ctl_9(4) = \{ 3 \}$, from which we finally reach a state $ctl_{15}$ with $ctl_{15}(1) = \{ 4 \}$, $ctl_{15}(2) = \{ 14, 15 \}$, $ctl_{15}(3) = \{ 5 \}$ and $ctl_{15}(4) = \{ 5 \}$.

In p2p semantics we also start with a state with $ctl(1) = \{ 0 \}$, $ctl(2) = \{ 0 \}$, $ctl(3) = \{ 0 \}$ and $ctl(4) = \{ 0 \}$, but in addition we have channels $c_{i,j}$ for all peers $i$ and $j$, which initially are the empty list $[]$. The only enabled transition is the sending transition $(0, !a^{1 \rightarrow 2}, 1)$ of Peer 1, which yields a new state, in which $ctl(1)$ is updated to $\{ 1 \}$, and $c_{1,2}$ becomes $[1 \stackrel{a}{\rightarrow} 2 ]$. Then the only enabled transition is the receiving transition $(0, ?a^{1 \rightarrow 2}, 1)$ of Peer 2, which yields a new state, in which $ctl(2)$ is updated to $\{ 1 \}$ and $c_{1,2}$ is emptied again. In this state the transitions $(1, !rr^{2 \rightarrow 3}, 3)$, $(1, !c^{2 \rightarrow 1}, 2)$ and $(1, !rr^{2 \rightarrow 4}, 4)$ of Peer 2 are enabled, so we can reach a new state by executing any of these. We can reach a successor state with $ctl(2)$ updated to $\{ 2, 3, 4 \}$, $ctl(1)$ updated to $\{ 2 \}$, $ctl(3)$ updated to $\{ 1 \}$ and $ctl(4)$ updated to $\{ 1 \}$. The non-empty channels are $c_{2,1} = [2 \stackrel{c}{\rightarrow} 1 ]$, $c_{2,3} = [2 \stackrel{rr}{\rightarrow} 3 ]$ and $c_{2,4} = [2 \stackrel{rr}{\rightarrow} 4 ]$. 

Omitting a few steps we can reach a state with $ctl(1) = \{ 3 \}$, $ctl(2) = \{ 3, 4, 6, 7 \}$, $ctl(3) = \{ 2 \}$ and $ctl(4) = \{ 2 \}$ and again empty channels, in which the transition $(2, !r^{4 \rightarrow 2}, 3)$ of Peer 4 is enabled. We create a successor state with $ctl(1) = \{ 3 \}$, $ctl(2) = \{ 3, 4, 6, 7 \}$, $ctl(3) = \{ 2 \}$ and $ctl(4) = \{ 3 \}$, in which channel $c_{4,2}$ has been updated to $[4 \stackrel{r}{\rightarrow} 2 ]$. Then transitions $(4, ?r^{4 \rightarrow 2}, 9)$ and $(7, ?r^{4 \rightarrow 2}, 9)$ are enabled, so we create a successor state with $ctl(1) = \{ 3 \}$, $ctl(2) = \{ 3, 6, 9 \}$, $ctl(3) = \{ 2 \}$ and $ctl(4) = \{ 3 \}$ and again empty channels.

We omit further details on how to extend this run, and how to obtain alternative runs in p2p semantics. The cases of queue and mailbox semantics are analogous.

\end{example}

\subsection{Runs of P2P Systems}

We can use update sets to define successor states. For this we require consistent update sets.

\begin{definition}\label{def-consistent}

An update set $\Delta$ is {\em consistent} if and only if for any updates $(\ell,v_1)$, $(\ell, v_2) \in \Delta$ of the same location we have $v_1 = v_2$.

\end{definition}

Now assume that $S$ is a state and $\Delta$ is an arbitrary update set such the updates in $\Delta$ correspond to the semantics underlying $S$. If $\Delta$ is consistent, we define a successor state $S^\prime = S + \Delta$ by 
\[ val_{S^\prime}(\ell) \; = \; \begin{cases} v &\text{if}\; (\ell,v) \in \Delta \\
val_S(\ell) &\text{else} \end{cases} \; . \]
For inconsistent $\Delta$ we extend the definition to $S + \Delta = S$.

Using update sets of the peers yielded in states we can now express the semantics of P2P systems by the runs they permit. As peers are supposed to operate concurrently, we adopt the definition of concurrent run from \cite{boerger:acin2016}. That is, peers yielding an update set in some state contribute with these updates in building a new later state, but not necessarily the immediate successor state.

\begin{definition}\label{def-run}

A {\em run} of a P2P system $\{ \mathcal{P}_p \}_{p \in P}$ in p2p, queue or mailbox semantics is a sequence $S_0, S_1, S_2, \dots$ of states such that $S_0$ is an initial state, and $S_{i+1} = S_i + \Delta_i$ holds for all $i \ge 0$, where the update sets are defined as $\Delta_i = \bigcup_{p \in P_i} \Delta_{a(i,p)}$ with update sets $\Delta_{a(i,p)} \in \boldsymbol{\Delta}_p(S_{a(i,p)})$ formed in previous states, i.e. $a(i,p) \le i$, for subsets $P_i \subseteq P$ of peers such that $p \in P_i$ implies $p \notin P_j$ for all $a(i,p) \le j < i$.

\end{definition}

According to our remark at the end of the previous subsection Definition \ref{def-run} is sufficient to define the runs for P2P systems with p2p, queue or mailbox semantics. For rendez-vous semantics the update sets $\Delta_i$ in the definition require further restrictions.

\begin{definition}\label{def-rv-run}

A {\em run} of a P2P system $\{ \mathcal{P}_p \}_{p \in P}$ in rendez-vous semantics is a sequence $S_0, S_1, S_2, \dots$ of states as in Definition \ref{def-run}, where in addition the update sets $\Delta_{a(i,p)}$ satisfy the following condition: if $\Delta_{a(i,p)}$ contains an update defined by a sending transition $t$ with the sending event $!m^{p \rightarrow j}$, then $a(i,j) = a(i,p)$ and $\Delta_{a(i,j)} \in \boldsymbol{\Delta}_j(S_{a(i,j)})$ contains the update defined by the corresponding receiving transition $t^\prime$ with the receiving event $?m^{p \rightarrow j}$ and vice versa.

\end{definition}

Extending our previous work on the basis FSMs we can now define message languages for our generalised P2P systems. For a state transition from $S_i$ to $S_{i+1}$ in a run consider the multiset $M_i$ containing those messages $m \in M$ that appear in an update in $\Delta_i$ defined by a sending transition. If $m$ appears in several such updates, it appears in $M_i$ with the corresponding multiplicity. Then let $\hat{M}_i$ be the set of ordered sequences containing all the elements of $M_i$.

\begin{definition}

For a run $R= S_0, S_1, \dots$ let $\mathcal{L}(R)$ be the set of all sequences with elements in $M$ that result from concatenation of $\hat{M}_0, \hat{M}_1, \dots$. Let $\mathcal{L}$ be the language of all finite sequences in $\bigcup_{R} \mathcal{L}(R)$, where the union is built over all runs of the P2P systems. We write $\mathcal{L}_0$, $\mathcal{L}_{p2p}$, $\mathcal{L}_q$ and $\mathcal{L}_m$ for the languages in rendez-vous, p2p, queue and mailbox semantics, respectively, and these languages the {\em trace languages} of the P2P system in rendez-vous, p2p, queue and mailbox semantics, respectively.

Two P2P systems are called {\em equivalent }in rendez-vous, p2p, queue and mailbox semantics, respectively, if and only if they define the same trace language.

\end{definition}

Note that finite sequences correspond to finite runs. So we can also define sublanguages $\hat{\mathcal{L}}_{p2p}$, $\hat{\mathcal{L}}_q$ and $\hat{\mathcal{L}}_m$ containing only those sequences of message that result from runs $S_0, S_1, \dots, S_k$, where all channels or mailboxes in the state $S_k$ are empty. These languages are called the {\em stable trace languages} of the P2P system in the different semantics.

We use these generalised trace and stable trace languages associated with a P2P system to generalise the notions of synchronisability and language-synchronis\-ability from \cite{schewe:foiks2020} and \cite{finkel:icalp2017}.

\begin{definition}\label{def-sync}

A P2P system $\{ \mathcal{P}_p \}_{p \in P}$ is {\em language-synchronisable} in p2p, queue or mailbox semantics, respectively, if and only if $\mathcal{L}_0 = \mathcal{L}_{p2p}$, $\mathcal{L}_0 = \mathcal{L}_q$ or $\mathcal{L}_0 = \mathcal{L}_m$ holds, respectively.

A P2P system $\{ \mathcal{P}_p \}_{p \in P}$ is {\em synchronisable} in p2p, queue or mailbox semantics, respectively, if and only if in addition $\hat{\mathcal{L}}_{p2p} = \mathcal{L}_{p2p}$, $\hat{\mathcal{L}}_q = \mathcal{L}_q$ or $\hat{\mathcal{L}}_m = \mathcal{L}_m$ holds, respectively.

\end{definition}

Following our previous remarks about the rationale behind P2P systems defined as families of control-state machines the notion of language-synchronisability emphasises a desirable property of P2P systems. On an abstract level we may view the system via its rendez-vous semantics, in which messaging is handled in an atomic way, whereas in p2p, queue and mailbox semantics we emphasise the reification by autonomous peers with the only difference concerning how messages are kept and processed. The stricter notion of synchronisability further emphasises the desire that messages should not be allowed to be ignored by the receiver peer.

Clearly, synchronisability implies language-synchronisability. These two properties are known to be undecidable in general even for the restricted case, where the peers are defined by FSMs \cite{finkel:icalp2017}.

\begin{example}\label{bsp-traces}

\ Following up on Example \ref{bsp-run} we can determine the trace and stable trace languages. We see that $\mathcal{L}_0 = \mathcal{L}_{p2p} = \mathcal{L}_q = \mathcal{L}_m = \mathcal{L}$ hold, and also $\hat{\mathcal{L}}_{p2p} = \mathcal{L}_{p2p}$, $\hat{\mathcal{L}}_q = \mathcal{L}_q$ and $\hat{\mathcal{L}}_m = \mathcal{L}_m$ hold, where $\mathcal{L}$ is the language containing the sequences $a^{1 \rightarrow 2} ir^{2 \rightarrow 1}$ and all $a^{1 \rightarrow 2} w_1 w_2 d^{2 \rightarrow 1}$, where $w_1$ is any interleaving of $c^{2 \rightarrow 1} p^{1 \rightarrow 2}$, $rr^{2 \rightarrow 3}$ and $rr^{2 \rightarrow 4}$, and $w_2$ is any interleaving of $pr^{2 \rightarrow 3} r^{3 \rightarrow 2} dr^{2 \rightarrow 4} c^{4 \rightarrow 2}$ with optional $c^{4 \rightarrow 2}$, and $pr^{2 \rightarrow 4} r^{4 \rightarrow 2} dr^{2 \rightarrow 3} c^{3 \rightarrow 2}$ with optional $c^{3 \rightarrow 2}$.

In particular, the P2P system from Example \ref{bsp-p2p} is synchronisable and language-synchronisable in p2p, queue and mailbox semantics.

\end{example}

\section{Choreography-Defined P2P Systems}\label{sec:choreo}

While synchronisability and language-synchronisability are undecidable in general, it was shown in \cite{schewe:foiks2020} that decidability results, if we assume P2P systems that are choreography-defined.

\subsection{Control-State Choreographies} 

We first define a generalised notion of choreography based on control-state machines, which we call control-state choreographies.

\begin{definition}\label{def-choreography}

Let $M$ be a set of messages with peers $P$. A {\em control-state choreography} (CSC) over $M$ and $P$ is a control-state machine $\mathcal{C} = (Q, q_0, M, \delta)$. 

\end{definition}

We define the semantics of CSCs analogous to the rendez-vous semantics of a P2P system with a single peer. The fact that we now only consider the messages instead of the associated sending and receiving events eases the formulation of runs.

\begin{definition}\label{def-state}


A {\em state} of a CSC $\mathcal{C} = (Q, q_0, M, \delta)$ is a set of control states $ct \subseteq Q$. An {\em initial state} is $ct = \{ q_0 \}$.


A transition $t = (q, i \stackrel{m}{\rightarrow} j, q^\prime) \in \tau \in \delta$ is {\em enabled} in state $S= ct$ iff $q \in ct$ holds; we write $t \in E_{\mathcal{C}}(S)$. Then the {\em update set $\Delta_t(S)$} contains exactly the update $(ct, ct - \{ q \} \cup \{ q^\prime \}))$.

For each set $\tau \in \delta$ the {\em update set yielded by $\tau$ in $S$} is $\Delta_\tau(S) = \bigcup_{t \in \tau, t \in E_{\mathcal{C}}(S)} \Delta_t(S)$, and the {\em set of update sets} in state $S$ is $\boldsymbol{\Delta}_{\mathcal{C}}(S) = \{ \Delta_\tau(S) \mid \tau \in \delta \}$.

\end{definition}

All remarks in the preceding section we made in connection with the partial updates apply analogously to the update sets defined by CSCs. We now obtain the notion of run of a CSC.

\begin{definition}\label{def-csc-run}

A {\em run} of a CSC $\mathcal{C} = (Q, q_0, M, \delta)$ is a sequence $S_0, S_1, S_2, \dots$ of states such that $S_0$ is an initial state, and $S_{i+1} = S_i + \Delta_i$ holds for an update set $\Delta_i \in \boldsymbol{\Delta}_{\mathcal{C}}(S_i)$.

\end{definition}

For a state transition from $S_i$ to $S_{i+1}$ in a run consider the multiset $M_i$ containing those messages $m \in M$ that appear in the update in $\Delta_i$. If $m$ appears in several such updates, it appears in $M_i$ with the corresponding multiplicity. Then let $\hat{M}_i$ be the set of ordered sequences containing all the elements of $M_i$.

For a run $R= S_0, S_1, \dots$ let $\mathcal{L}(R)$ be the set of all sequences with elements in $M$ that result from concatenation of $\hat{M}_0, \hat{M}_1, \dots$. We define the {\em trace language} $\mathcal{L}(\mathcal{C})$ of the CSC as the language of all finite sequences in $\bigcup_{R} \mathcal{L}(R)$, where the union is built over all runs $R$ of the CSC. 

\begin{example}\label{bsp-choreo}

\ We define a choreography by a control-state machine with the set $Q = \{ 0, \dots, 15 \}$ of control-states, initial control state $q_0 = 0$, the messages
\begin{gather*}
M = \{ 1 \stackrel{a}{\rightarrow} 2, 2 \stackrel{ir}{\rightarrow} 1, 2 \stackrel{c}{\rightarrow} 1, 1 \stackrel{p}{\rightarrow} 2, 2 \stackrel{rr}{\rightarrow} 3, 2 \stackrel{rr}{\rightarrow} 4, 2 \stackrel{pr}{\rightarrow} 3, 2 \stackrel{pr}{\rightarrow} 4, 3 \stackrel{r}{\rightarrow} 2, \\
4 \stackrel{r}{\rightarrow} 2, 2 \stackrel{dr}{\rightarrow} 3, 2 \stackrel{dr}{\rightarrow} 4, 3 \stackrel{c}{\rightarrow} 2, 4 \stackrel{c}{\rightarrow} 2, 2 \stackrel{d}{\rightarrow} 1 \} \; ,
\end{gather*}
and the set $\delta$ with the following five transition relations:
\begin{align*}
\tau_1 &= \{ (0, 1 \stackrel{a}{\rightarrow} 2,1), (1,2 \stackrel{ir}{\rightarrow} 1,15) \} \; , \\
\tau_2 &= \{ (0, 1 \stackrel{a}{\rightarrow} 2, (1, 2 \stackrel{c}{\rightarrow} 1,2), (1,2 \stackrel{rr}{\rightarrow} 3,3), (1,2 \stackrel{rr}{\rightarrow} 4,4), (2,1 \stackrel{p}{\rightarrow} 2,5), \\
&\qquad (5,2 \stackrel{pr}{\rightarrow} 3,6), (5,2 \stackrel{rr}{\rightarrow} 4,7), (3,3 \stackrel{r}{\rightarrow} 2,8), (6,3 \stackrel{r}{\rightarrow} 2,8), (4,4 \stackrel{r}{\rightarrow} 2,9), \\
&\qquad (7,4 \stackrel{r}{\rightarrow} 2,9), (8,2 \stackrel{dr}{\rightarrow} 3,10),  (9,2 \stackrel{dr}{\rightarrow} 4,11), (10,4 \stackrel{c}{\rightarrow} 2,12), (12,2 \stackrel{d}{\rightarrow} 1,14), \\
&\qquad (11,3 \stackrel{c}{\rightarrow} 2,13), (13,2 \stackrel{d}{\rightarrow} 1,14) \} \; , \\
\tau_3 &= \{ (0, 1 \stackrel{a}{\rightarrow} 2, (1, 2 \stackrel{c}{\rightarrow} 1,2), (1,2 \stackrel{rr}{\rightarrow} 3,3), (1,2 \stackrel{rr}{\rightarrow} 4,4), (2,1 \stackrel{p}{\rightarrow} 2,5), \\
&\qquad (5,2 \stackrel{pr}{\rightarrow} 3,6), (5,2 \stackrel{rr}{\rightarrow} 4,7), (3,3 \stackrel{r}{\rightarrow} 2,8), (6,3 \stackrel{r}{\rightarrow} 2,8), (4,4 \stackrel{r}{\rightarrow} 2,9), \\
&\qquad (7,4 \stackrel{r}{\rightarrow} 2,9), (8,2 \stackrel{dr}{\rightarrow} 3,10),  (9,2 \stackrel{dr}{\rightarrow} 4,11), (10,2 \stackrel{d}{\rightarrow} 1,14), \\
&\qquad (11,3 \stackrel{c}{\rightarrow} 2,13), (13,2 \stackrel{d}{\rightarrow} 1,14) \} \; , \\
\tau_4 &= \{ (0, 1 \stackrel{a}{\rightarrow} 2, (1, 2 \stackrel{c}{\rightarrow} 1,2), (1,2 \stackrel{rr}{\rightarrow} 3,3), (1,2 \stackrel{rr}{\rightarrow} 4,4), (2,1 \stackrel{p}{\rightarrow} 2,5), \\
&\qquad (5,2 \stackrel{pr}{\rightarrow} 3,6), (5,2 \stackrel{rr}{\rightarrow} 4,7), (3,3 \stackrel{r}{\rightarrow} 2,8), (6,3 \stackrel{r}{\rightarrow} 2,8), (4,4 \stackrel{r}{\rightarrow} 2,9), \\
&\qquad (7,4 \stackrel{r}{\rightarrow} 2,9), (8,2 \stackrel{dr}{\rightarrow} 3,10),  (9,2 \stackrel{dr}{\rightarrow} 4,11), (10,4 \stackrel{c}{\rightarrow} 2,12), (12,2 \stackrel{d}{\rightarrow} 1,14), \\
&\qquad (11,2 \stackrel{d}{\rightarrow} 1,14) \} \; , \\
\tau_5 &= \{ (0, 1 \stackrel{a}{\rightarrow} 2, (1, 2 \stackrel{c}{\rightarrow} 1,2), (1,2 \stackrel{rr}{\rightarrow} 3,3), (1,2 \stackrel{rr}{\rightarrow} 4,4), (2,1 \stackrel{p}{\rightarrow} 2,5), \\
&\qquad (5,2 \stackrel{pr}{\rightarrow} 3,6), (5,2 \stackrel{rr}{\rightarrow} 4,7), (3,3 \stackrel{r}{\rightarrow} 2,8), (6,3 \stackrel{r}{\rightarrow} 2,8), (4,4 \stackrel{r}{\rightarrow} 2,9), \\
&\qquad (7,4 \stackrel{r}{\rightarrow} 2,9), (8,2 \stackrel{dr}{\rightarrow} 3,10),  (9,2 \stackrel{dr}{\rightarrow} 4,11), (10,2 \stackrel{d}{\rightarrow} 1,14), (11,2 \stackrel{d}{\rightarrow} 1,14) \} \; .
\end{align*}

Note that final states 14 and 15 could be merged without changing the semantics. The choreography is illustrated in Figure \ref{fig-choreo}. The projected P2P system is the one in Example \ref{bsp-p2p}. Note that $3 \stackrel{r}{\rightarrow} 2$ (likewise $4 \stackrel{r}{\rightarrow} 2$) are only enabled, if both control-states 3 and 6 (4 and 7, respectively) have been reached.

\end{example}

\begin{figure}[htb]
\begin{center}
\includegraphics[scale=0.6]{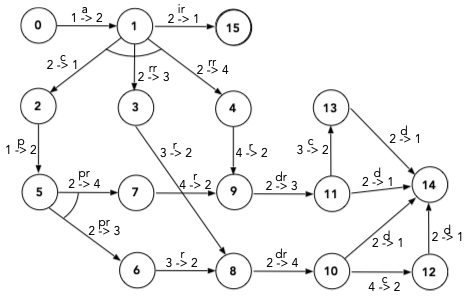}
\caption{Realisable Choreography\label{fig-choreo}}
\end{center}
\end{figure}

\subsection{Choreographies Defined by P2P Systems}

Now consider a P2P system $\{ \mathcal{P}_p \}_{p \in P}$. For each peer $p \in P$ and each transition set $\tau \in \delta_p$ we may assume without loss of generality that transitions $t, t^\prime \in \tau$ have the same initial and final control states $q_{\tau}$ and $q_{\tau}^\prime$, respectively. Otherwise, the transition set $\tau$ could be split without changing the semantics. According to Definition \ref{def-enabled} equal initial control states are necessary for the transitions to be enabled simultaneously, and according to Definition \ref{def-transition-update} equal final control states are necessary to ensure that consistent update sets are yielded. With this assumption we can select a transition set $\tau_p \in \delta_p$ for each peer $p \in P$ are choose $\tau_p = \emptyset$. 

\begin{definition}\label{def-pairing} 

A selection $\{ \tau_p \mid p \in P \}$ such that either $\tau_p \in \delta_p$ or $\tau_p = \emptyset$ holds  is {\em closed under pairing} iff whenever $\tau_p$ contains a sending transition 
$t = (q_{\tau_p}, !m^{p \rightarrow p^\prime}, q_{\tau_p}^\prime)$, then $\tau_{p^\prime}$ contains the corresponding receiving transition $t^\prime = (q_{\tau_{p^\prime}}, ?m^{p \rightarrow p^\prime}, q_{\tau_{p^\prime}}^\prime)$ and vice versa.

\end{definition}

If a selection $\{ \tau_p \mid p \in P \}$ is closed under pairing, we can define tuples $q, q^\prime: P \rightarrow \bigcup_{p \in P} Q_p$ with $q(p) = q_{\tau_p} \in Q_p$ and $q^\prime(p) = q_{\tau_p}^\prime \in Q_p$ for all $p \in P$; in case $\tau_p = \emptyset$ we extend $q$ and $q^\prime$ chosing $q(p) = q^\prime(p) \in Q_p$ arbitrarily. Then the set
\[ \bigcup_{p, p^\prime \in P} \{ (q, p \stackrel{m}{\rightarrow} p^\prime, q^\prime) \mid (q_{\tau_p}, !m^{p \rightarrow p^\prime}, q_{\tau_p}^\prime) \in \tau_p, (q_{\tau_{p^\prime}}, ?m^{p \rightarrow p^\prime},  q_{\tau_{p^\prime}}^\prime) \in \tau_{p^\prime} \} \]
is the {\em set of choreography transitions} of the selection.

\begin{definition}\label{def-p2p-csc}

The {\em choreography defined by a P2P system} $\{ \mathcal{P}_p \}_{p \in P}$ is 
$\mathcal{C} = (Q, q_0, M, \delta)$, where the set $Q$ of control states is the set of all tuples $q: P \rightarrow \bigcup_{p \in P} Q_p$ with $q(p) \in Q_p$, the initial control state is $q_0$ with $q_0(p) = q_{p,0}$ for all $p \in P$ and $\delta$ contains exactly the non-empty sets of choreography transitions defined by all selections that are closed under pairing.

\end{definition}

\begin{example}

\ The choreography defined by the P2P system in Example \ref{bsp-p2p} is the choreography in Example \ref{bsp-choreo}.

\end{example}

\begin{proposition}\label{prop-p2p-csc}

The trace language $\mathcal{L}(\mathcal{C})$ of the choreography $\mathcal{C}$ defined by a P2P system $\{ \mathcal{P}_p \}_{p \in P}$ is equal to the trace language $\mathcal{L}_0$ defined by the rendez-vous semantics of the P2P system.

\end{proposition}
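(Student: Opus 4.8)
The plan is to prove the equality of the two trace languages by setting up a step-by-step correspondence between runs of the CSC $\mathcal{C}$ and runs of the P2P system $\{\mathcal{P}_p\}_{p \in P}$ in rendez-vous semantics, and then lifting this correspondence to the associated languages. The starting observation is that the two notions of state coincide: a state of $\mathcal{C}$ is by Definition \ref{def-state} a single control state $ct \in Q$, and by Definition \ref{def-p2p-csc} the set $Q$ is precisely the set of control tuples $q : P \to \bigcup_{p} Q_p$; on the other hand a rendez-vous state of the P2P system has empty message pool and is therefore fully determined by its control tuple $ctl$. Under this identification the initial states match, since $q_0(p) = q_{p,0}$ for all $p \in P$.

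Next I would match individual transitions. By Definition \ref{def-p2p-csc} every transition set $\tau \in \delta$ of $\mathcal{C}$ is the non-empty set of choreography transitions of some selection $\{\tau_p\}_{p\in P}$ that is closed under pairing in the sense of Definition \ref{def-pairing}, and applying $\tau$ at a state $ct = q$ (its common source tuple) changes the control tuple to the common target tuple $q'$ while emitting exactly the multiset of messages carried by the paired send/receive transitions of the selection. For the inclusion $\mathcal{L}(\mathcal{C}) \subseteq \mathcal{L}_0$ I would turn each CSC step into a rendez-vous step by taking $P_i = \{ p \mid \tau_p \neq \emptyset\}$ and applying all update sets immediately, i.e. $a(i,p) = i$. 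Closure under pairing is then exactly the pairing condition required by Definition \ref{def-rv-run}, the non-interference requirement of Definition \ref{def-run} becomes vacuous, and the multiset $M_i$ of sent messages is identical, so the set of orderings $\hat{M}_i$ is the same.

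For the converse inclusion $\mathcal{L}_0 \subseteq \mathcal{L}(\mathcal{C})$ I would start from a rendez-vous step $\Delta_i = \bigcup_{p \in P_i} \Delta_{a(i,p)}$ with $\Delta_{a(i,p)} = \Delta_{\tau_p}(S_{a(i,p)})$ and recover the selection $\{\tau_p\}$, setting $\tau_p = \emptyset$ for $p \notin P_i$. The rendez-vous pairing condition of Definition \ref{def-rv-run} guarantees this selection is closed under pairing, so its choreography transitions form a set $\tau \in \delta$; choosing the free components of the source and target tuples of $\tau$ for the non-moving peers to agree with $ctl_i$ makes $\tau$ enabled at the CSC state $ct = ctl_i$ and yields exactly the successor tuple $ctl_{i+1}$, again with the same message multiset. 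An induction on the length of the run then transports whole runs in either direction, and since corresponding steps produce identical multisets $M_i$, the sets of orderings $\hat{M}_i$ and hence the concatenated trace languages coincide.

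The step I expect to be the main obstacle is reconciling the asynchronous concurrent-run semantics of Definition \ref{def-run} with the synchronous single-selection steps of a CSC. In a concurrent run the update sets combined into $\Delta_i$ may have been formed at earlier states $a(i,p) < i$ and may pool together several communication clusters that were enabled at different moments. The key fact that makes the correspondence work is that rendez-vous states carry no message pool, so the only locations touched are the control locations $(ctl,p)$, and the non-interference condition $p \notin P_j$ for $a(i,p) \le j < i$ forces $ctl(p)$ to be unchanged between $a(i,p)$ and $i$. Hence each delayed update set $\Delta_{\tau_p}(S_{a(i,p)})$ coincides with $\Delta_{\tau_p}(S_i)$, and the whole of $\Delta_i$ equals the update set yielded by the single selection $\{\tau_p\}$ applied at $ctl_i$. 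This normalisation to immediate application at the current control tuple is what collapses a concurrent run into a sequence of CSC selection steps, and it is the point that must be argued with care.
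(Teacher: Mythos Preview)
Your proposal is correct and follows essentially the same route as the paper: both directions are handled by the selection $\leftrightarrow$ rendez-vous-step correspondence you describe, with $P_i = \{p \mid \tau_p \neq \emptyset\}$ and $a(i,p)=i$ for $\mathcal{L}(\mathcal{C}) \subseteq \mathcal{L}_0$, and recovery of a pairing-closed selection from the rendez-vous condition for the converse. Your explicit normalisation of delayed updates (using that in rendez-vous semantics peer $p$ only touches $(ctl,p)$ and the non-interference constraint freezes $ctl(p)$ between $a(i,p)$ and $i$) is exactly the paper's argument, only spelled out more carefully than the paper's one-line justification.
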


\begin{proof}

We first show $\mathcal{L}(\mathcal{C}) \subseteq \mathcal{L}_0(\{ \mathcal{P}_p \}_{p \in P})$. Whenever we have $m_0 \dots m_\ell \in \mathcal{L}(\mathcal{C})$, there exists a run $S_0, S_1, \dots, S_{k+1}$ of $\mathcal{C}$ with $\hat{m}_i = m_{f(i)} \dots, m_{\ell(i)} \in \hat{M}_i$ for $i = 0, \dots, k$---$f(i) \le \ell(i)$, $f(i+1) = \ell(i) + 1$, $f(0) = 0$ and $\ell(k) = \ell$. In this run we have $S_{i+1} = S_i + \Delta_{\tau_i}(S_i)$ for $i = 0, \dots, k$ with $\Delta_{\tau_i}(S_i) = \bigcup_{t \in \tau_i, t \in E_{\mathcal{C}}(S_i)} \Delta_t(S_i) = \{ (ct, q_{i+1}) \}$. 

The states $q_i$ ($i = 0, \dots, k$) are defined as tuples $P \rightarrow \bigcup_{p \in P} Q_p$ with $q_i(p) \in Q_p$ and in particular $q_0(p) = q_{p,0}$, and the transition sets are
\[ \tau_i = \bigcup_{p, p^\prime \in P} \{ (q, p \stackrel{m}{\rightarrow} p^\prime, q^\prime) \mid (q_{\tau_p}, !m^{p \rightarrow p^\prime}, q_{\tau_p}^\prime) \in \tau_p, (q_{\tau_{p^\prime}}, ?m^{p \rightarrow p^\prime},  q_{\tau_{p^\prime}}^\prime) \in \tau_{p^\prime} \} \]
defined by a selection $\{ \tau_p \mid p \in P \}$ that is closed under pairing, i.e. we always have $(q_i(p), !m^{p \rightarrow p^\prime}, q_{i+1}(p)) \in \tau_p$ and $(q_i(p^\prime), ?m^{p \rightarrow p^\prime}, q_{i+1}(p^\prime)) \in \tau_{p^\prime}$ with messages $m \in M_i$.

Then we can choose $P_i = \{ p \in P \mid \tau_p \neq \emptyset \}$ and $a(i,p) = i$. This defines a run $\bar{S}_0, \bar{S}_1, \dots, \bar{S}_{k+1}$ of $\{ \mathcal{P}_p \}_{p \in P}$ in rendez-vous semantics with update sets $\bar{\Delta}_{a(i,p)} \in \boldsymbol{\Delta}_p(\bar{S}_i)$ such that we have $\bar{\Delta}_{a(i,p)} = \bigcup_{t \in \tau_p, t \in E_p(S_i)} \bar{\Delta}_t(S_i)$. Due to the closure under pairing the multisets of messages processed by these update sets are exactly the multisets $M_i$, the corresponding sublanguages are $\hat{M}_i$ for $i = 0, \dots, k$, and hence $m_0 \dots m_\ell \in \mathcal{L}_0(\{ \mathcal{P}_p \}_{p \in P})$.

Conversely, we have to show $\mathcal{L}_0(\{ \mathcal{P}_p \}_{p \in P}) \subseteq \mathcal{L}(\mathcal{C})$. Whenever we now have $m_0 \dots m_\ell$ $\in \mathcal{L}_0(\{ \mathcal{P}_p \}_{p \in P})$, there exists a run $\bar{S}_0, \bar{S}_1, \dots, \bar{S}_{k+1}$ of $\{ \mathcal{P}_p \}_{p \in P}$ in rendez-vous semantics with $\hat{m}_i \in \hat{M}_i$ for $i = 0, \dots, k$ and the same conditions as above. Then we have $\bar{S}_{i+1} = \bar{S}_i + \bar{\Delta}_{\tau_i}$ for $i = 0, \dots, k$ with update sets $\bar{\Delta}_{\tau_i} = \bigcup_{p \in P_i} \bar{\Delta}_{a(i,p)}$, where $a(i,p) \le i$ and $\bar{\Delta}_{a(i,p)} \in \boldsymbol{\Delta}_p(\bar{S}_{a(i,p)})$ hold.

However, any update set in $\boldsymbol{\Delta}_p(\bar{S}_j)$ contains only updates of locations $(q,p)$, Therefore, we also have $\bar{\Delta}_{a(i,p)} \in \boldsymbol{\Delta}_p(\bar{S}_i)$ or equivalently $a(i,p) = i$. The additional condition in Definition \ref{def-rv-run} allows us to define a transition $\tau_i$ by means of a selection $\{ \tau_p \mid p \in P \}$ that is closed under pairing with $q_{\tau_p} = q_i(p)$, $q^\prime_{\tau_p} = q_{i+1}(p)$ and $(q_{\tau_p}, !m^{p \rightarrow p^\prime}, q_{\tau_p}^\prime) \in \tau_p \wedge (q_{\tau_{p^\prime}}, ?m^{p \rightarrow p^\prime},  q_{\tau_{p^\prime}}^\prime) \in \tau_{p^\prime}$ if and only if $(q_i, p \stackrel{m}{\rightarrow} p^\prime, q_{i+1}) \in \Delta_t(\bar{S}_i)$ for an enabled transition $t \in \tau \in \delta$. Then the same sets $M_i$ result from a run $S_0, S_1, \dots, S_{k+1}$ of $\mathcal{C}$ with states $q_i$ defined by $q_i(p) = val_{\bar{S}_i}(ctl,p)$, which implies $m_0 \dots m_\ell \in \mathcal{L}(\mathcal{C})$.\qed

\end{proof}

\subsection{P2P Systems Defined by Choreographies}

We now investigate the inverse of Proposition \ref{prop-p2p-csc}, i.e. we will define a P2P system associated with a given choreography and show that the trace language $\mathcal{L}_0$ defined by its rendez-vous semantics is the same as the trace language of the choreography, which generalises \cite[Prop.4]{schewe:foiks2020}. For this we will introduce projections to peers, which will be straightforward except for a little subtlety associated with $\epsilon$-transitions, in which a control state is changed without a sending or receiving event. Analogous to \cite[Prop.2]{schewe:foiks2020} such $\epsilon$-transitions can be eliminated; they are neither needed nor particularly useful.

So let $\mathcal{C} = (Q, q_0, M, \delta)$ be a CSC over $M$ and $P$. For each $p \in P$ we define the {\em projection} $\mathcal{C}_p$ of $\mathcal{C}$ in two steps:

\begin{description}


\item[\textbf{Step 1.}] First define $\mathcal{C}_p^\prime = (Q, q_0, \Sigma_p, \delta_p^\prime)$ with alphabet $\Sigma_p = s(M_p^s) \cup r(M_p^r)$ and set of transition sets $\delta_p^\prime = \{ \tau_p \mid \tau \in \delta \}$, where $\tau_p = \{ (q,\pi_p(m),q^\prime) \mid (q,m,q^\prime) \in \tau \}$ with
\[ \pi_p(i \stackrel{m}{\rightarrow} j) = \begin{cases} !m^{p \rightarrow j} &\text{if}\; i = p \\
?m^{i \rightarrow p} &\text{if}\; j = p \\ \epsilon &\text{else} \end{cases} \; . \]

Here, $\mathcal{C}_p^\prime$ is not a control-state machine as defined in Definition \ref{def-ctl-machine}, because it admits transitions, in which only the control state is modified, but no message is sent nor received. Nonetheless, for such {\em $\epsilon$-extended machines} we can still define enabled transitions as in Definition \ref{def-enabled} and update sets as in Definition \ref{def-update-set}---the only difference concerns Definition \ref{def-transition-update} in that $\epsilon$-transitions do not yield updates to locations with function symbols $channel$, $queue$ or $mailbox$ in p2p, queue and mailbox semantics, respectively. Therefore, we can treat $\{ \mathcal{C}_p^\prime \}_{p \in P}$ also as a P2P system.


\item[\textbf{Step 2.}] We can eliminate the $\epsilon$-transitions and define a control state machine $\mathcal{C}_p = (Q, q_0, \Sigma_p, \delta_p)$. As before we can assume without loss of generality that for each $\tau \in \delta_p^\prime$ we have unique states $q_\tau$ and $q_\tau^\prime$ such that the transitions $t \in \tau$ all have the form $(q_\tau, \tilde{m}_i, q_\tau^\prime)$; otherwise we could split $\tau$. 


Then take transitions $t = (q,\epsilon,q^\prime) \in \tau$ and $t^\prime = (q^\prime, \tilde{m}, q^{\prime\prime}) \in \tau^\prime$. We may combine $t$ and $t^\prime$ to the new transition $t^\prime \circ t = (p, \tilde{m}, q^{\prime\prime})$.

Then we can replace $\tau$ by $\tau^\prime \circ \tau$ containing all those composed transition $t^\prime \circ t$ with $t \in \tau$ and $t^\prime \in \tau^\prime$. This replacement can be iterated until there are no more transition sets with $\epsilon$-transitions, which defines $\delta_p$.

\end{description}

\begin{example}\label{bsp-projection}

\ Consider the choreography from Example \ref{bsp-choreo}. For peer 1 the first step above gives rise to the $\epsilon$-extended control-state machine in Figure \ref{fig-projection}.

\begin{figure}[htb]
\begin{center}
\includegraphics[scale=0.6]{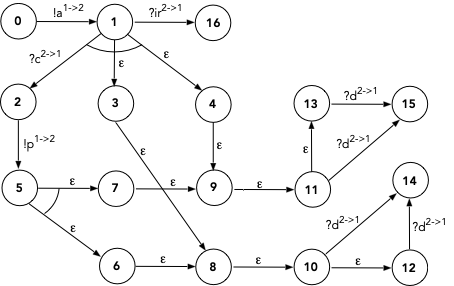}
\caption{Projection of a Choreography\label{fig-projection}}
\end{center}
\end{figure}

In step 2 all $\epsilon$-transitions are eliminated. The resulting control-state machine is equivalent to the one in Figure \ref{fig-p2p_peer1}, if we rename the control-states: 5 becomes 15, 3 becomes 5, and 4 becomes 14. 

For peer 2 the first step above gives already rise to the control-state machine in Figure \ref{fig-p2p_peer2} without $\epsilon$-transitions, so step 2 is obsolete.

\end{example}

This construction suggests the following terminology of a choreography-defined P2P system.

\begin{definition}\label{def-choreo-def}

A P2P system $\{ \mathcal{P}_p \}_{p \in P}$ is {\em choreography-defined} iff there exists a CSC $\mathcal{C}$ such that the projected P2P system $\{ \mathcal{C}_p \}_{p \in P}$ is equivalent to $\{ \mathcal{P}_p \}_{p \in P}$.

\end{definition}

\begin{proposition}\label{prop-csc-p2p}

The trace language $\mathcal{L}(\mathcal{C})$ of the choreography $\mathcal{C}$ is equal to the trace language $\mathcal{L}_0$ defined by the rendez-vous semantics of the P2P system $\{ \mathcal{C}_p \}_{p \in P}$ and the P2P system $\{ \mathcal{C}_p^\prime \}_{p \in P}$ with $\epsilon$-transitions.

\end{proposition}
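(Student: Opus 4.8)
The plan is to separate the statement into the single core equality $\mathcal{L}(\mathcal{C}) = \mathcal{L}_0(\{\mathcal{C}_p^\prime\}_{p \in P})$ for the $\epsilon$-extended system and then transfer it to $\{\mathcal{C}_p\}_{p \in P}$. For the transfer I would first show that the $\epsilon$-elimination of Step 2 preserves the rendez-vous trace language, i.e. $\mathcal{L}_0(\{\mathcal{C}_p^\prime\}_{p \in P}) = \mathcal{L}_0(\{\mathcal{C}_p\}_{p \in P})$. This is the observation that replacing a set $\tau$ carrying an $\epsilon$-transition $(q,\epsilon,q^\prime)$ by the composed set $\tau^\prime \circ \tau$ neither creates nor destroys any send or receive event and leaves the reachable control states and the emitted messages unchanged, exactly as in the elimination of $\epsilon$-transitions for FSMs in \cite{schewe:foiks2020}. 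Note also that this proposition is the inverse of Proposition \ref{prop-p2p-csc}, which suggests proving the two inclusions by simulations dual to the ones carried out there, this time between a $\mathcal{C}$-run and a rendez-vous run of $\{\mathcal{C}_p^\prime\}_{p \in P}$.

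The inclusion $\mathcal{L}(\mathcal{C}) \subseteq \mathcal{L}_0(\{\mathcal{C}_p^\prime\}_{p \in P})$ I would obtain by a lockstep simulation that keeps all peers synchronised on the ``diagonal''. Given a run $S_0, S_1, \dots$ of $\mathcal{C}$ firing transition sets $\tau^{(i)} \in \delta$ with common initial state $q_i$ and common final state $q_{i+1}$, I project each $\tau^{(i)}$ to the family $\{\tau^{(i)}_p\}_{p \in P}$ of Step 1. By construction every such family is closed under pairing in the sense of Definition \ref{def-pairing}: a send $!m^{p \rightarrow j} \in \tau^{(i)}_p$ stems from a message $p \stackrel{m}{\rightarrow} j \in \tau^{(i)}$ whose projection $\pi_j$ to the receiver is exactly the matching receive $?m^{p \rightarrow j} \in \tau^{(i)}_j$. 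I then build a rendez-vous run of $\{\mathcal{C}_p^\prime\}_{p \in P}$ in which, at step $i$, every peer $p$ contributes $\tau^{(i)}_p$ with $a(i,p)=i$, so that senders and receivers move through the paired events of Definition \ref{def-rv-run} and the remaining peers move through $\epsilon$-transitions. Since all peers share the control states $Q$ and start at $q_0$, the control tuple stays on the diagonal $ctl(p)=q_i$ throughout, the message multiset emitted at step $i$ is precisely $M_i$, and the resulting trace coincides with the one chosen in $\mathcal{C}$.

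The converse inclusion is the hard part, and where I expect the main obstacle. Here I would invoke Proposition \ref{prop-p2p-csc}, read for the $\epsilon$-extended system, to identify $\mathcal{L}_0(\{\mathcal{C}_p^\prime\}_{p \in P})$ with the trace language of the choreography $\mathcal{C}^{\prime\prime}$ that $\{\mathcal{C}_p^\prime\}_{p \in P}$ defines via Definition \ref{def-p2p-csc}, and then prove $\mathcal{L}(\mathcal{C}^{\prime\prime}) = \mathcal{L}(\mathcal{C})$. The control states of $\mathcal{C}^{\prime\prime}$ are tuples $P \rightarrow Q$, and the diagonal embedding $q \mapsto (p \mapsto q)$ carries $\mathcal{C}$-runs to $\mathcal{C}^{\prime\prime}$-runs, which re-establishes $\mathcal{L}(\mathcal{C}) \subseteq \mathcal{L}(\mathcal{C}^{\prime\prime})$. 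The difficulty is the reverse inclusion: a selection closed under pairing may in principle combine projections of \emph{different} transition sets $\tau_1, \tau_2 \in \delta$ that share a message $p \stackrel{m}{\rightarrow} j$ at the same control state but with different targets, producing a choreography transition of $\mathcal{C}^{\prime\prime}$ whose initial tuple is diagonal but whose final tuple is not; combined with the asynchrony of Definition \ref{def-rv-run} and the free firing of bystander $\epsilon$-transitions, this threatens to desynchronise the peers and emit message sequences outside $\mathcal{L}(\mathcal{C})$. The crux is therefore a normalisation lemma, which I would prove by induction on run length: every reachable handshake $!m^{p \rightarrow j}$/$?m^{p \rightarrow j}$ together with the accompanying bystander moves can be reorganised so that sender, receiver and bystanders leave a common control state $q_\tau$ through the projection of one and the same $\tau \in \delta$, advancing jointly to $q_\tau^\prime$, so that any off-diagonal excursion is trace-dominated by a synchronised diagonal run. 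Granting this lemma, every message multiset produced by $\mathcal{C}^{\prime\prime}$ is some $M_i$ of a $\mathcal{C}$-run, $\mathcal{L}(\mathcal{C}^{\prime\prime}) \subseteq \mathcal{L}(\mathcal{C})$ follows, and the chain of equalities closes.
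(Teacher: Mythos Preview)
Your forward inclusion matches the paper's: project each fired $\tau \in \delta$ to the family $\{\tau_p\}_{p\in P}$ of Step~1, note that the sender and receiver projections of every message in $\tau$ form the required rendez-vous pair, and run all peers in lockstep with $a(i,p)=i$ so that the control tuple stays on the diagonal. The paper handles $\{\mathcal{C}_p^\prime\}$ and $\{\mathcal{C}_p\}$ in one stroke by remarking that the $\epsilon$-transitions can be ignored, rather than isolating a separate $\epsilon$-elimination lemma as you do.

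For the converse the paper takes a much shorter path than yours: it does not pass through the induced choreography $\mathcal{C}^{\prime\prime}$ of Definition~\ref{def-p2p-csc} and Proposition~\ref{prop-p2p-csc} at all. Starting from a rendez-vous run of the projected system, it first normalises to $a(i,p)=i$ (borrowed from the proof of Proposition~\ref{prop-p2p-csc}), then observes that each paired send/receive $(q_i,!m^{p\rightarrow p^\prime},q_{i+1})\in\tau_p$, $(q_i,?m^{p\rightarrow p^\prime},q_{i+1})\in\tau_{p^\prime}$ is, by the projection construction, the image of a transition $(q_i,p\stackrel{m}{\rightarrow}p^\prime,q_{i+1})$ of $\mathcal{C}$, and asserts that the collection of these lifted transitions constitutes a run of $\mathcal{C}$ with the same message sets $M_i$. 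Your route via $\mathcal{L}(\mathcal{C}^{\prime\prime})$ adds a layer of indirection without buying additional leverage.

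That said, the difficulty you single out---peers selecting projections of \emph{different} $\tau$'s from $\delta$ and thereby leaving the diagonal---is genuine, and the paper's proof does not address it explicitly either: by writing the enabled transitions uniformly as $(q_i,\cdot,q_{i+1})$ it tacitly assumes that all participating peers share a common source $q_i$ and a common target $q_{i+1}$ at every step, which is precisely the content of your unproved normalisation lemma. So your detour through $\mathcal{C}^{\prime\prime}$ does not sidestep the obstacle, and the lemma you defer is exactly the step the paper's direct argument passes over without comment.
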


\begin{proof}

We first show $\mathcal{L}(\mathcal{C}) \subseteq \mathcal{L}_0(\{ \mathcal{C}_p \}_{p \in P})$ and $\mathcal{L}(\mathcal{C}) \subseteq \mathcal{L}_0(\{ \mathcal{C}_p^\prime \}_{p \in P})$. For $m_0 \dots m_\ell \in \mathcal{L}(\mathcal{C})$ we can write again $m_0 \dots m_\ell = \hat{m}_0 \dots \hat{m}_k$ as in the proof of Proposition \ref{prop-p2p-csc}, corresponding to a run $S_0, \dots, S_{k+1}$ with $S_{i+1} = S_i + \Delta_i$ for $i = 0, \dots, k$.

Here we have $\Delta_i = \{ (ct,q_{i+1}) \} = \bigcup_{t \in \tau_i, t \in E_{\mathcal{C}}(S_i)} \Delta_t(S_i)$. These transitions $t$ have the form $(q_i, p \stackrel{m}{\rightarrow} p^\prime, q_{i+1})$, so they define projections $(q_i, !m^{p \rightarrow p^\prime}, q_{i+1}) \in \tau_p$ and $(q_i, ?m^{p \rightarrow p^\prime}, q_{i+1}) \in \tau_{p^\prime}$ plus $\epsilon$-transitions in $\tau_{p^0}$ for $p^0 \neq p$, $p^0 \neq p^\prime$.

In $\{ \mathcal{C}_p^\prime \}_{p \in P}$ with rendez-vous semantics these projected transitions in $\tau_p$ with $p \in P$ ignoring $\epsilon$-transitions are simultaneously enabled, so we get a run $\bar{S}_0, \bar{S}_1, \dots, \bar{S}_{k+1}$ of $\{ \mathcal{C}_p^\prime \}_{p \in P}$ in rendez-vous semantics with the same sets of messages $M_i$, hence $m_0 \dots m_\ell \in \mathcal{L}_0(\{ \mathcal{C}_p^\prime \}_{p \in P})$. As we can ignore $\epsilon$-transitions, the same applies for $\{ \mathcal{C}_p \}_{p \in P}$, which shows $m_0 \dots m_\ell \in \mathcal{L}_0(\{ \mathcal{C}_p \}_{p \in P})$.

To show the opposite inclusions we start from $m_0 \dots m_\ell \in \mathcal{L}_0(\{ \mathcal{C}_p \}_{p \in P})$ (or $m_0 \dots m_\ell \in \mathcal{L}_0(\{ \mathcal{C}_p^\prime \}_{p \in P})$) and consider a run $\bar{S}_0, \bar{S}_1, \dots, \bar{S}_{k+1}$ of $\{ \mathcal{C}_p^\prime \}_{p \in P}$ (or of $\{ \mathcal{C}_p^\prime \}_{p \in P}$) in rendez-vous semantics with sets of messages $M_i$. As in the proof of Proposition \ref{prop-p2p-csc} we can assume without loss of generality to always have $a(i,p) = i$ (see Definition \ref{def-run}). The corresponding update sets are defined by transition sets $\tau_p$ with $p \in P$ containing simultaneously transitions of the form $(q_i, !m^{p \rightarrow p^\prime}, q_{i+1}) \in \tau_p$ and $(q_i, ?m^{p \rightarrow p^\prime}, q_{i+1}) \in \tau_{p^\prime}$. That is, the enabled transitons are projections of transitions $(q_i, p \stackrel{m}{\rightarrow} p^\prime, q_{i+1})$ of $\mathcal{C}$. This defines a run $S_0, \dots, S_{k+1}$ of $\mathcal{C}$ with $S_{i+1} = S_i + \Delta_i$ for $i = 0, \dots, k$, in which we have $\Delta_i = \{ (ct,q_{i+1}) \} = \bigcup_{t \in \tau_i, t \in E_{\mathcal{C}}(S_i)} \Delta_t(S_i)$. This run produces the same message sets $M_i$ as $\bar{S}_0, \bar{S}_1, \dots, \bar{S}_{k+1}$, hence $m_0 \dots m_\ell \in \mathcal{L}(\mathcal{C})$.\qed

\end{proof}

Proposition \ref{prop-csc-p2p} shows that a choreography-defined P2P systems and choreographies mutually define each other, and the language defined by a choreography is the rendez-vous language of the corresponding P2P system. This justifies the following notion of realisability.

\begin{definition}\label{def-realisable}

A CSC $\mathcal{C}$ is called {\em realisable} if and only if its projected P2P system $\{ \mathcal{C}_p \}_{p \in P}$ is synchronisable.

\end{definition}

Clearly, the notion of realisability depends on p2p, queue or mailbox semantics.

\subsection{Synchronisability for Choreography-Defined P2P Systems}

Definition \ref{def-sync} also defines the weaker notion of language synchronisability, so we could analogously define ``language-realisability'' for CSCs. However, for the restricted case of choreographies defined by FSMs we showed in \cite[Prop.5]{schewe:foiks2020} that language synchronisability and synchronisability coincide for choreography-defined P2P systems. This result also holds for our generalised CSCs.

\begin{proposition}\label{prop-lang-realisable}

A choreography-defined P2P system $\mathcal{P} = \{ \mathcal{P}_p \}_{p \in P}$ is synchronisable with respect to p2p, queue or mailbox semantics, respectively, if and only if it is language synchronisable with respect to the same semantics.

\end{proposition}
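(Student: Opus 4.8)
The plan is to prove the two implications separately. One direction is immediate: by Definition \ref{def-sync} synchronisability is by definition language-synchronisability \emph{together with} the stability identity $\hat{\mathcal{L}}_{p2p} = \mathcal{L}_{p2p}$, $\hat{\mathcal{L}}_q = \mathcal{L}_q$ or $\hat{\mathcal{L}}_m = \mathcal{L}_m$, so synchronisability trivially entails language-synchronisability. All the work lies in the converse. Writing $\mathcal{L}_x$ for any one of $\mathcal{L}_{p2p}, \mathcal{L}_q, \mathcal{L}_m$ and $\hat{\mathcal{L}}_x$ for the corresponding stable language, I assume language-synchronisability, i.e. $\mathcal{L}_0 = \mathcal{L}_x$, and must derive $\hat{\mathcal{L}}_x = \mathcal{L}_x$. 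Since every stable run is in particular a run we always have $\hat{\mathcal{L}}_x \subseteq \mathcal{L}_x$, so it suffices to establish the inclusion $\mathcal{L}_x \subseteq \hat{\mathcal{L}}_x$.

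For this inclusion I would combine the hypothesis with Proposition \ref{prop-csc-p2p}: because the system is choreography-defined we have $\mathcal{L}_x = \mathcal{L}_0 = \mathcal{L}(\mathcal{C})$ for the defining CSC $\mathcal{C}$. Hence any $w = m_0 \dots m_\ell \in \mathcal{L}_x$ arises from a rendez-vous run of $\{\mathcal{C}_p\}_{p \in P}$, equivalently a run $q_0, \dots, q_{k+1}$ of $\mathcal{C}$, in which, by Definition \ref{def-rv-run}, every sending event is paired with its matching receiving event inside the very same step. The idea is to turn this rendez-vous run into a \emph{stable} $x$-run with the same trace by desequentialising each step $q_i \to q_{i+1}$ into a short asynchronous segment: following a suitable order of the peers (senders before the receivers of their messages), first let the sending peers fire their atomic transition sets $\tau_p$, depositing the messages of step $i$ into the relevant channels or mailboxes in the order prescribed by $w$, and afterwards let the receiving peers fire, consuming exactly those messages. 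Because no receiver has moved while the senders acted, each receiving peer is still in the source control state $q_i(p)$ required by Definition \ref{def-enabled}, and the message it must consume is precisely the one just deposited; thus the receive is enabled and the segment ends with all channels empty again. Concatenating the segments yields a run of $\{\mathcal{C}_p\}_{p \in P}$ whose sent messages still spell $w$ but which is stable at its end, whence $w \in \hat{\mathcal{L}}_x$.

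The hard part will be to make the ``senders before receivers'' scheduling of a single step well-defined, and here the semantics need slightly different care. In p2p and queue semantics enabledness of a receive additionally demands that the consumed message be the \emph{front} element of the relevant FIFO queue, so I would have to order the deposits and withdrawals within a step so that, per channel, each message is removed before the next one addressed to the same receiver is inserted, exploiting that the partial-update convention underlying Definition \ref{def-update-set} only permits parallel receives on a queue when they request the same front element. More seriously, since a peer's transition set $\tau_p$ is atomic it may contain both sends and receives, inducing a dependency relation on the peers of step $i$ with an edge from the sender to the receiver of each message; a valid ``senders-before-receivers'' schedule exists exactly when this relation is acyclic. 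A cyclic step---for instance two peers that must exchange messages simultaneously---cannot be drained to an empty configuration in any asynchronous semantics, so its rendez-vous trace would lie in $\mathcal{L}_0$ but not in $\mathcal{L}_x$. This is precisely where the hypothesis re-enters: language-synchronisability, i.e. $w \in \mathcal{L}_x$, guarantees that an asynchronous realisation of each step already exists and thereby excludes such unrealisable cyclic configurations, so the scheduling can always be carried out. Finally I would observe that the argument is uniform across the three semantics, the mailbox case being the easiest, as a receive there only requires the message to be present in the multiset rather than at the front.
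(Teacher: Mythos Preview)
Your overall strategy matches the paper's exactly: reduce to showing $\mathcal{L}_\omega \subseteq \hat{\mathcal{L}}_\omega$, use the hypothesis $\mathcal{L}_0 = \mathcal{L}_\omega$ together with Proposition~\ref{prop-csc-p2p} to obtain a rendez-vous run producing $w$, and then convert that run into a stable $\omega$-run witnessing $w \in \hat{\mathcal{L}}_\omega$.

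The difference lies in how the conversion is carried out. The paper does not desequentialise at all: it simply takes the rendez-vous run $\bar S_0,\dots,\bar S_{k+1}$ as it stands and asserts that, once the $channel$/$queue$/$mailbox$ updates from Definition~\ref{def-transition-update} are added, this same sequence is already a run in p2p, queue or mailbox semantics. Because in a rendez-vous run every sending update is accompanied in the very same step by the matching receiving update, the partial-update convention following Definition~\ref{def-update-set} makes each ``insert $m$'' and ``remove $m$'' cancel, so every channel location is empty in $\bar S_{k+1}$ and the run is stable. The paper therefore never needs to order peers within a step, and the cyclic-dependency problem you worry about simply does not arise.

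Your route---splitting each rendez-vous step into an explicit ``senders first, then receivers'' schedule in which each peer fires its atomic $\tau_p$ once---is a legitimate alternative, but it is strictly more work: it forces you to confront the atomicity of $\tau_p$ and the possibility of send/receive cycles among the peers of a single step, and your appeal to language-synchronisability to exclude such cycles is only a sketch (you would need to argue that a cyclic step produces a trace in $\mathcal{L}_\omega \setminus \mathcal{L}_0$, which is plausible but not what you actually wrote). The paper's direct reinterpretation sidesteps all of this. So your proof is on the right track and would go through with more care, but the paper's argument is shorter precisely because it keeps the rendez-vous step intact rather than trying to linearise it.
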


\begin{proof}

According to Definition \ref{def-sync} and the remarks following it we have to show $\mathcal{L}_\omega(\mathcal{P}) \subseteq \hat{\mathcal{L}}_\omega(\mathcal{P})$ for $\omega \in \{ p2p, q, m \}$. So let $w \in \mathcal{L}_\omega(\mathcal{P})$, say $w = i_1 \stackrel{m_1}{\rightarrow} j_1 \dots i_\ell \stackrel{m_\ell}{\rightarrow} j_\ell$. Let $w^! = !m_1^{i_1 \rightarrow j_1} \dots !m_\ell^{i_\ell \rightarrow j_\ell}$ and $w^? = ?m_1^{i_1 \rightarrow j_1} \dots ?m_\ell^{i_\ell \rightarrow j_\ell}$ be the corresponding sequences of sending and receiving events, respectively.

Then there exists an interleaving $w^\prime$ of $w^!$ and a prefix of $w^?$, say $w^\prime = w_0 \dots w_k$, where each $w_i$ is a sequence of sending and receiving events and $!m_x^{i_x \rightarrow j_x}$ precedes $?m_x^{i_x \rightarrow j_x}$ in $w^\prime$---if the latter one appears in $w^\prime$ at all---and there exists a run $S_0, \dots, S_{k+1}$ of $\mathcal{P}$ in p2p, queue or mailbox semantics with $S_{i+1} = S_i + \Delta_i$ for $i = 0, \dots, k$, in which the update sets satisfy $\Delta_i = \bigcup_{p \in P_i} \Delta_{a(i,p)}$ with
\[ \Delta_{a(i,p)} = \Delta_{\tau_{i,p}}(S_{a(i,p)}) = \bigcup_{ t \in \tau_{i,p}, t \in E_p(S_{a(i,p)})} \Delta_t(S_{a(i,p)}) \in \boldsymbol{\Delta}_p (S_{a(i,p)}) \; . \]

If $!m_x^{i_x \rightarrow j_x}$ appears in $w_i$, then with $p = i_x$ there exists an enabled transition $t \in \tau_{i,p}$ of the form $t = (q_t, !m_x^{i_x \rightarrow j_x}, q_t^\prime)$. In this case we have $((ctl, p), q_t^\prime) \in \Delta_t(S_{a(i,p)})$ and according to Definition \ref{def-transition-update} depending on the semantics one of the following holds:
\begin{gather}
( (channel, (i_x,j_x)), val_S((channel,(p,j_x))) \oplus [ i_x \stackrel{m_x}{\rightarrow} j_x ] ) \in \Delta_t(S_{a(i,p)}) \tag{p2p}\\
( (queue, j_x), val_S((queue,j_x)) \oplus [ i_x \stackrel{m_x}{\rightarrow} j_x ] ) \in \Delta_t(S_{a(i,p)}) \tag{queue} \\
( (mailbox, j_x), val_S((mailbox,j_x)) \uplus [ i_x \stackrel{m_x}{\rightarrow} j_x ] ) \in \Delta_t(S_{a(i,p)}) \tag{mailbox}
\end{gather}

Analogously, if $?m_x^{i_x \rightarrow j_x}$ appears in $w_i$, then with $p = j_x$ there exists an enabled transition $t \in \tau_{i,p}$ of the form $t = (q_t, ?m_x^{i_x \rightarrow j_x}, q_t^\prime)$. In this case we have $((ctl, p), q_t^\prime) \in \Delta_t(S_{a(i,p)})$ and according to Definition \ref{def-transition-update} depending on the semantics one of the following holds:
\begin{gather}
( (channel, (i_x,j_x)), val_S((channel,(i_x,p))) \ominus [ i_x \stackrel{m_x}{\rightarrow} j_x ] ) \in \Delta_t(S_{a(i,p)}) \tag{p2p} \\
( (queue, j_x), val_S((queue,p)) \ominus [ i_x \stackrel{m_x}{\rightarrow} j_x ] ) \in \Delta_t(S_{a(i,p)}) \tag{queue} \\
( (mailbox, j_x), val_S((mailbox,p)) - [ i_x \stackrel{m_x}{\rightarrow} j_x ] ) \in \Delta_t(S_{a(i,p)}) \tag{mailbox}
\end{gather}
In the first two cases $i_x \stackrel{m_x}{\rightarrow} j_x$ must have been the front element in the FIFO queue. Then in $S_{k+1}$ the $channel$, $queue$ or $mailbox$ locations, respectively, will still contain the messages $i_x \stackrel{m_x}{\rightarrow} j_x$, for which the reciving event $?m_x^{i_x \rightarrow j_x}$ does not occur in $w^\prime$.

As $\mathcal{P}$ is choreography-defined, all transitions are projections from a CSC $\mathcal{C} = (Q, q_0, M, \delta)$. Hence in both cases above---$t = (q_t, !m_x^{i_x \rightarrow j_x}, q_t^\prime) \in \tau_p \in \delta_p$ with $p = i_x$ or $t = (q_t, ?m_x^{i_x \rightarrow j_x}, q_t^\prime) \in \tau_p \in \delta_p$ with $p = j_x$---we obtain $(q_t, i_x \stackrel{m_x}{\rightarrow} j_x, q_t^\prime) \in \tau \in \delta$.

As we have $\mathcal{L}_0(\mathcal{P}) = \mathcal{L}_\omega(\mathcal{P})$ for $\omega \in \{ p2p, q, m \}$, this defines a run of $\mathcal{C}$ and due to Proposition \ref{prop-csc-p2p} we obtain a run $\bar{S}_0, \dots, \bar{S}_{k+1}$ of $\mathcal{P}$ in rendez-vous semantics with the same message sets $M_i$ ($i=0,\dots,k$) as the run $S_0, \dots, S_{k+1}$. Let $\bar{S}_{i+1} = \bar{S}_i + \bar{\Delta}_i$ in this run, i.e. whenever a sending transition is used in $\Delta_i$, the corresponding receive transition is used in $\bar{\Delta}_i$ at the same time.

The run $\bar{S}_0, \dots, \bar{S}_{k+1}$ also defines a run of $\mathcal{P}$ in p2p, queue or mailbox semantics, respectively, if the updates to $channel$, $queue$ and $mailbox$ locations are added. In this run we have that one of $val_{\bar{S}_{k+1}}(channel,(i,j)) = []$, $val_{\bar{S}_{k+1}}(queue,j) = []$ or $val_{\bar{S}_{k+1}}(mailbox,j) = \langle\rangle$ holds for all peers $i,j$. This shows $w \in \hat{\mathcal{L}}_\omega(\mathcal{P})$ for the corresponding $\omega \in \{ p2p, q, m \}$.\qed

\end{proof}

\section{Characterisation of Realisability}\label{sec:realisability}

We now investigate realisability of CSCs. From Proposition \ref{prop-csc-p2p} we know that CSCs are equivalent to their projected P2P systems concerning rendez-vous semantics in the sense of defining the same trace language, so a CSC prescribes the behaviour of a P2P system capturing communication in a synchronous way, i.e. sending and receiving of messages are always synchronised. We have to show the synchronisability of these P2P systems, which means to show that the semantics expressed by the trace languages is preserved, when a more realistic asynchronous behaviour using channels, queues or mailboxes is taken into account. As a consequence of Proposition \ref{prop-lang-realisable} it suffces to concentrate on language synchronisability.

Not all P2P systems are choreography-defined, and not all choreographies will be realisable. We will now derive necessary and sufficient conditions for CSCs that guarantee realisability. For this we will generalise the sequence and choice conditions from \cite{schewe:foiks2020}.

\subsection{The Sequence Condition}

In \cite[Prop.6]{schewe:foiks2020} we proved that if a choreography defined by an FSM is realisable, then any two messages that follow each other in a sequence must either be independent, i.e. their order can be swapped, or the sender of the second message must be the same as the sender or receiver of the first message. The first alternative condition was a workaround for parallelism, and the second condition expressed that messages in a sequence must either have the same sender or the receiver of a message must be the sender of the next one. We will now show that in a slightly generalised way this condition must also be satisfied for CSCs in order to obtain realisability. 

\begin{definition}\label{def-sequence}

A CSC $\mathcal{C} = (Q, q_0, M, \delta)$ is said to satisfy the {\em sequence condition} if and only if for any two transitions $(q_1, i \stackrel{m_1}{\rightarrow} j, q_2) \in \tau \in \delta$ and $(q_2, k \stackrel{m_2}{\rightarrow} \ell, q_3) \in \tau^\prime \in \delta$ we have that $(q_1, k \stackrel{m_2}{\rightarrow} \ell, q_2) \in \tau \in \delta$ and $(q_2, i \stackrel{m_1}{\rightarrow} j, q_3) \in \tau^\prime \in \delta$ hold or $k \in \{ i,j \}$ holds.

\end{definition}

\begin{proposition}\label{prop-sequence}

If $\mathcal{C} = (Q, q_0, M, \delta)$ is a realisable CSC with respect to p2p, queue or mailbox semantics, then $\mathcal{C}$ satisfies the sequence condition.

\end{proposition}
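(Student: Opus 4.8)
The plan is to argue by contraposition. Assume $\mathcal{C}$ is realisable, so by Definition~\ref{def-realisable} and Proposition~\ref{prop-lang-realisable} the projected system $\{\mathcal{C}_p\}_{p\in P}$ is language synchronisable, i.e. $\mathcal{L}_0 = \mathcal{L}_\omega$ for the semantics $\omega\in\{p2p,q,m\}$ under consideration, and by Proposition~\ref{prop-csc-p2p} we may replace $\mathcal{L}_0$ by $\mathcal{L}(\mathcal{C})$ throughout. Suppose, for contradiction, that the sequence condition is violated: there are transitions $(q_1, i\stackrel{m_1}{\rightarrow} j, q_2)\in\tau\in\delta$ and $(q_2, k\stackrel{m_2}{\rightarrow}\ell, q_3)\in\tau'\in\delta$ for which the swap alternative of Definition~\ref{def-sequence} fails and, in addition, $k\notin\{i,j\}$. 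I would then exhibit a word in $\mathcal{L}_\omega\setminus\mathcal{L}(\mathcal{C})$, contradicting $\mathcal{L}_\omega=\mathcal{L}(\mathcal{C})$.

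To build the witness, first fix a run of $\mathcal{C}$ of the shape $S_0\to\cdots\to q_1\xrightarrow{m_1}q_2\xrightarrow{m_2}q_3$ and let $w$ be the trace produced up to the state $q_1$; then $w\,(i\stackrel{m_1}{\rightarrow} j)\,(k\stackrel{m_2}{\rightarrow}\ell)\in\mathcal{L}(\mathcal{C})$. The decisive structural observation concerns peer $k$: since $k\notin\{i,j\}$, the transition carrying $m_1$ projects to an $\epsilon$-transition $(q_1,\epsilon,q_2)$ of $\mathcal{C}_k^\prime$, which in Step~2 of the projection is absorbed into the following send, producing a transition $(q_1, !m_2^{k\rightarrow\ell}, q_3)$ of $\mathcal{C}_k$. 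Hence $k$ can emit $m_2$ already from the local state corresponding to $q_1$, without waiting for $m_1$. I would then construct an asynchronous run in semantics $\omega$ that (i) replays $w$ in lockstep, reaching the state in which every peer sits in the local state corresponding to $q_1$ with all channels, queues or mailboxes empty, (ii) lets $k$ fire the send of $m_2$ into its channel, queue or mailbox, and (iii) lets $i$ fire the send of $m_1$ from $q_1$. As these two sends stem from distinct, non-interfering peers, this is a legal run, and since only send events are recorded in a trace, its trace is $w\,(k\stackrel{m_2}{\rightarrow}\ell)\,(i\stackrel{m_1}{\rightarrow} j)\in\mathcal{L}_\omega$, uniformly for all three message-pool disciplines.

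It is worth noting why $k\notin\{i,j\}$ is indispensable, as this explains the shape of the condition: if $k=i$ then $\mathcal{C}_i$ emits $m_1$ before $m_2$, and if $k=j$ then $\mathcal{C}_j$ must first receive $m_1$—so $i$ must already have sent it—before emitting $m_2$; in either case no asynchronous run can place the send of $m_2$ ahead of that of $m_1$, so the order is never at risk. The remaining and crucial step is to show $w\,(k\stackrel{m_2}{\rightarrow}\ell)\,(i\stackrel{m_1}{\rightarrow} j)\notin\mathcal{L}(\mathcal{C})$. Using $\mathcal{L}(\mathcal{C})=\mathcal{L}_0$ and the lockstep correspondence between CSC runs and rendez-vous runs underlying Proposition~\ref{prop-csc-p2p}, any run realising this trace would force a CSC derivation emitting $m_2$ immediately after $w$ and then $m_1$; I would argue that, given the run fixed above pins down the state $q_1$ and that in rendez-vous the send of $m_2$ is paired with its receipt at $\ell$, the only admissible such derivation is the swapped pair $q_1\xrightarrow{m_2}q_2\xrightarrow{m_1}q_3$, which the failure of the swap alternative excludes. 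This yields the contradiction.

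The main obstacle I expect is exactly this last step: ruling out that some other branch of a possibly nondeterministic $\mathcal{C}$ reproduces the send order $m_2 m_1$ through different control states, since the swap alternative in Definition~\ref{def-sequence} only refers to the specific states $q_2,q_3$ and the specific transition sets $\tau,\tau'$. The way to control this is to choose the run reaching $q_1$ so that the relevant local states are determined, and to exploit that in rendez-vous each emission of $m_2$ is matched by a receipt at $\ell$, whence the control state immediately preceding the emission of $m_1$ is fixed; with these states pinned down, the absence of the swap transitions leaves no legal continuation. All other ingredients—the simulation of rendez-vous runs in the asynchronous semantics to secure $w\in\mathcal{L}_\omega$, and the uniform treatment of the p2p, queue and mailbox pools—are routine.
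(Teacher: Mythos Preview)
Your approach is essentially the paper's: assume the sequence condition fails with $k\notin\{i,j\}$, observe that peer $k$ can emit $m_2$ already from the local state corresponding to $q_1$ (you argue this via the $\epsilon$-elimination in Step~2 of the projection; the paper argues directly that $k$ need not participate in the $m_1$-step and hence is already at $q_2$), and then build an asynchronous trace that places $m_2$ too early. The only cosmetic difference is that the paper fires $k$'s send of $m_2$ \emph{in parallel} with the step producing $m_1$, so that $m_2$ lands in the same message block as $m_1$, whereas you fire it in a separate preceding step to obtain the swapped word $w\,m_2\,m_1$.

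The obstacle you single out in your final paragraph is genuine, and it is precisely the point the paper's own proof glosses over. The failure of the swap alternative is stated \emph{locally}, for the specific $\tau,\tau'$ and the specific states $q_1,q_2,q_3$; nothing in that failure excludes, a priori, that $\mathcal{L}(\mathcal{C})$ contains the offending trace via some other branch of a nondeterministic $\mathcal{C}$. Your suggested fix---choosing the run so as to pin down the local states and using the rendez-vous pairing of $m_2$ with its receipt at $\ell$---does not obviously close the gap, because a different CSC run reaching a different control state after $w$ could still produce $m_2$ before $m_1$. The paper simply asserts that the extended trace ``contradicts our assumptions on $w\,w_1\,w_2\,w'$'' without arguing that no CSC run yields it. So you have correctly located the delicate step; your sketch of a resolution is not yet convincing, but you are no worse off than the paper itself on this point.
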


\begin{proof}

Without loss of generality we can assume that $\mathcal{C}$ only contains reachable states. Now assume that the sequence condition is violated. Then there exist transitions $(q_1, i \stackrel{m_1}{\rightarrow} j, q_2) \in \tau \in \delta$ and $(q_2, k \stackrel{m_2}{\rightarrow} \ell, q_3) \in \tau^\prime \in \delta$, but not $(q_1, k \stackrel{m_2}{\rightarrow} \ell, q_2) \in \tau$, such that $i,j,k \in P$ are pairwise different. Consider the projections in the P2P system $\mathcal{P} = \{ \mathcal{C}_p \mid p \in P \}$:
\begin{gather*}
(q_1, !m_1^{i \rightarrow j}, q_2) \in \tau_1 \in \delta_i \qquad\qquad
(q_1, ?m_1^{i \rightarrow j}, q_2) \in \tau_2 \in \delta_j \\
(q_2, !m_2^{k \rightarrow \ell}, q_3) \in \tau_3 \in \delta_k \qquad\qquad
(q_2, ?m_2^{k \rightarrow \ell}, q_3) \in \tau_4 \in \delta_\ell
\end{gather*}

Then $\mathcal{L}_0(\mathcal{P})$ contains sequences $w w_1 w_2 w^\prime$, where 
$i \stackrel{m_1}{\rightarrow} j$ appears in $w_1$, $k \stackrel{m_2}{\rightarrow} \ell$ appears in $w_2$, and all $w_1^\prime$, $w_2^\prime$ resulting from $w_1$, $w_2$ by permutation of the messages give rise to sequences $w w_1^\prime w_2^\prime w^\prime \in \mathcal{L}_0()$, but $k \stackrel{m_2}{\rightarrow} \ell$ does not appear in $w_1$.

The sequence $w w_1 w_2 w^\prime$ corresponds to a run $S_0, \dots, S_{i-1}, S_i, S_{i+1}, \dots, S_k$ of $\mathcal{P}$ in p2p, queue or mailbox semantics, respectively, with $w_1 \in \hat{M}_{i-1}$ and $w_2 \in \hat{M}_i$. In particular, we have $val_{S_{i-1}}(ctl,i) = q_1$, $val_{S_{i-1}}(ctl,j) = q_1$ and $val_{S_i}(ctl,i) = val_{S_i}(ctl,j) = val_{S_i}(ctl,k) = q_2$. As $\mathcal{P}$ is choreography-defined, we can assume without loss of grenerality that $k \notin P_{i-1}$ holds, which implies $val_{S_{i-1}}(ctl,k) = q_2$, i.e. there is no transition with peer $k$ in the $i$'th step of the run.

We may therefore add $\tau_3$ to define the update set in the $i$'th step, which creates another run, in which $M_{i-1}$ is extended, in particular $k \stackrel{m_2}{\rightarrow} \ell \in M_{i-1}$. With this run we obtain $w \bar{w}_1 \bar{w}_2 w^\prime \in \mathcal{L}_\omega(\mathcal{P})$ for $\omega \in \{ p2p, q, m \}$, and $k \stackrel{m_2}{\rightarrow} \ell$ appears in $\bar{w}_1$. This contradicts our assumptions on $w w_1 w_2 w^\prime$.\qed

\end{proof}

\subsection{The Choice Condition}

In \cite[Prop.7]{schewe:foiks2020} we further proved that if a choreography defined by an FSM is realisable, then any two messages subject to a choice must be independent or have the same sender. The second condition expresses that messages in a choice can only refer to the same sender. We will now show that in a slightly generalised way this condition must also be satisfied for CSCs in order to obtain realisability. 

\begin{definition}\label{def-choice}

A CSC $\mathcal{C} = (Q, q_0, M, \delta)$ is said to satisfy the {\em choice condition} if and only if for any two transitions $(q_1, i \stackrel{m_1}{\rightarrow} j, q_2) \in \tau \in \delta$ and $(q_1, k \stackrel{m_2}{\rightarrow} \ell, q_3) \in \tau^\prime \in \delta$ we have that $(q_1, k \stackrel{m_2}{\rightarrow} \ell, q_2) \in \tau \in \delta$ and $(q_1, i \stackrel{m_1}{\rightarrow} j, q_3) \in \tau^\prime \in \delta$ hold or $k = i$ holds.

\end{definition}

\begin{proposition}\label{prop-choice}

If $\mathcal{C} = (Q, q_0, M, \delta)$ is a realisable CSC with respect to p2p, queue or mailbox semantics, then $\mathcal{C}$ satisfies the choice condition.

\end{proposition}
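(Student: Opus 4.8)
The plan is to argue by contraposition, mirroring the proof of Proposition~\ref{prop-sequence} but replacing the two successive transitions by two transitions emanating from the \emph{same} control state. As there, I would first assume without loss of generality that $\mathcal{C}$ has only reachable control states. Suppose the choice condition fails: there are transitions $(q_1, i \stackrel{m_1}{\rightarrow} j, q_2) \in \tau \in \delta$ and $(q_1, k \stackrel{m_2}{\rightarrow} \ell, q_3) \in \tau' \in \delta$ with $k \neq i$ but $(q_1, k \stackrel{m_2}{\rightarrow} \ell, q_2) \notin \tau$. Projecting onto the peers of $\mathcal{P} = \{\mathcal{C}_p\}_{p\in P}$ yields the send $(q_1, !m_1^{i\to j}, q_2) \in \tau_1 \in \delta_i$ with matching receive $(q_1, ?m_1^{i\to j}, q_2) \in \tau_2 \in \delta_j$, and likewise $(q_1, !m_2^{k\to\ell}, q_3) \in \tau_3 \in \delta_k$ and $(q_1, ?m_2^{k\to\ell}, q_3) \in \tau_4 \in \delta_\ell$ for the second branch.

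Since $q_1$ is reachable, $\mathcal{L}_0(\mathcal{P})$ contains a word $w w_1 w'$ that drives the system into $q_1$ and then fires the first branch $\tau$, so that $i\stackrel{m_1}{\rightarrow}j$ occurs in the batch $w_1$ while $k\stackrel{m_2}{\rightarrow}\ell$ does not---the latter because transition sets have a uniform target and $(q_1, k\stackrel{m_2}{\rightarrow}\ell, q_2)\notin\tau$ force $m_2\notin\tau$. This word corresponds to a run $S_0,\dots,S_{n}$ of $\mathcal{P}$ in p2p, queue or mailbox semantics with $w_1\in\hat{M}_{i-1}$, in which $val_{S_{i-1}}(ctl,i)=val_{S_{i-1}}(ctl,j)=q_1$; because $\mathcal{P}$ is choreography-defined I may assume $k\notin P_{i-1}$, so $val_{S_{i-1}}(ctl,k)=q_1$ as well.

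The decisive observation---the analogue of the sequence argument---is that $k\neq i$ means the sender $k$ was not moved by the $m_1$-step, so its send transition $\tau_3$ is already enabled in $S_{i-1}$ and independent of peer $i$. I would add $\tau_3$ to the update set of that step, obtaining a second run of $\mathcal{P}$ whose emitted batch $\bar w_1$ now contains \emph{both} $i\stackrel{m_1}{\rightarrow}j$ and $k\stackrel{m_2}{\rightarrow}\ell$; as $\hat{M}_{i-1}$ ranges over all orderings of that batch, $\mathcal{L}_\omega(\mathcal{P})$ for $\omega\in\{p2p,q,m\}$ then contains a word $w\bar w_1 w''$ in which $k\stackrel{m_2}{\rightarrow}\ell$ appears already in the first batch alongside $i\stackrel{m_1}{\rightarrow}j$. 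Realisability together with Propositions~\ref{prop-lang-realisable} and~\ref{prop-csc-p2p} forces this word into $\mathcal{L}(\mathcal{C})$; but from $q_1$ the messages $m_1$ and $m_2$ lie on the two genuinely distinct branches $\tau,\tau'$ that---by the failure of the independence alternative---are not fused into a single transition set, so no run of $\mathcal{C}$ can emit both in one step at $q_1$. This contradicts the defining property of $w w_1 w'$, namely that $k\stackrel{m_2}{\rightarrow}\ell\notin w_1$.

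I expect the main obstacle to be exactly this final non-membership step. The send of $m_2$ by peer $k$ is a purely local action, so its early firing in the asynchronous semantics is immediate; the difficulty is that flattening runs to message sequences erases batch boundaries, so I must exhibit a concrete flattened word---obtained by choosing a suitable permutation of $\bar w_1$, e.g.\ one placing $k\stackrel{m_2}{\rightarrow}\ell$ before $i\stackrel{m_1}{\rightarrow}j$---that the branching structure of $\mathcal{C}$ cannot reproduce, invoking the independence clause precisely to exclude the merging of $\tau$ and $\tau'$ that would otherwise permit it. As in Proposition~\ref{prop-sequence}, the reachability and ``without loss of generality'' reductions on $\mathcal{C}$ are what let me pin down such a word.
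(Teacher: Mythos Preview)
Your proposal is correct and follows essentially the same approach as the paper: assume a violating pair of branches at $q_1$, project to the peers, take an asynchronous run that reaches $q_1$ and fires the $m_1$-branch, and then---using $k\neq i$---fire peer $k$'s send of $m_2$ in the same step to produce a trace in $\mathcal{L}_\omega(\mathcal{P})$ that the choreography cannot match. The only expository difference is that you spell out the final contradiction via Propositions~\ref{prop-csc-p2p} and~\ref{prop-lang-realisable} and the branching structure of $\mathcal{C}$, whereas the paper compresses this into ``contradicts our assumptions on $w w_1 w'$''; your version is arguably clearer on that point.
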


\begin{proof}

As in the proof of Proposition \ref{prop-sequence} we can assume without loss of generality that $\mathcal{C}$ only contains reachable states. Now assume that the choice condition is violated. Then there exist transitions $(q_1, i \stackrel{m_1}{\rightarrow} j, q_2) \in \tau \in \delta$ and $(q_1, k \stackrel{m_2}{\rightarrow} \ell, q_3) \in \tau^\prime \in \delta$, but $(q_1, k \stackrel{m_2}{\rightarrow} \ell, q_2) \notin \tau$ and $(q_1, i \stackrel{m_1}{\rightarrow} j, q_3) \notin \tau^\prime$, such that $i \neq k$ holds. Consider the projections in the P2P system $\mathcal{P} = \{ \mathcal{C}_p \mid p \in P \}$:
\begin{gather*}
(q_1, !m_1^{i \rightarrow j}, q_2) \in \tau_1 \in \delta_i \qquad\qquad
(q_1, ?m_1^{i \rightarrow j}, q_2) \in \tau_2 \in \delta_j \\
(q_1, !m_2^{k \rightarrow \ell}, q_3) \in \tau_3 \in \delta_k \qquad\qquad
(q_1, ?m_2^{k \rightarrow \ell}, q_3) \in \tau_4 \in \delta_\ell
\end{gather*}

Then $\mathcal{L}_0(\mathcal{P})$ contains sequences $w w_1 w^\prime$ and $w w_2 w^{\prime\prime}$, where $i \stackrel{m_1}{\rightarrow} j$ appears in $w_1$, $k \stackrel{m_2}{\rightarrow} \ell$ appears in $w_2$, and all $w_1^\prime$, $w_2^\prime$ resulting from $w_1$, $w_2$ by permutation of the messages give rise to sequences $w w_1^\prime w^\prime, w w_2^\prime w^{\prime\prime} \in \mathcal{L}_0()$, but $k \stackrel{m_2}{\rightarrow} \ell$ does not appear in $w_1$.

These sequences correspond to runs $S_0, \dots, S_i, S_{i+1}, \dots, S_{k_1}$ and $S_0, \dots, S_i, S_{i+1}^\prime$, $\dots, S_{k_2}^\prime$ of $\mathcal{P}$ in p2p, queue or mailbox semantics, respectively, with $w_1 \in \hat{M}_{i,1}$ and $w_2 \in \hat{M}_{i,2}$ using the message sets in the step $(i+1)$ in the two runs. In particular, we have $val_{S_i}(ctl,i) = val_{S_i}(ctl,k) = q_1$, $val_{S_{i+1}}(ctl,i) = q_2$ and $val_{S_{i+1}}(ctl,k) = q_3$. 

As $\mathcal{P}$ is choreography-defined and $k \neq i$ holds, we can add $\tau_3$ to define the update set in the step $(i+1)$ of the first of these two runs. This creates another run, in which $M_i$ is extended, in particular $k \stackrel{m_2}{\rightarrow} \ell \in M_i$. With this run we obtain $w \bar{w}_1 w^\prime \in \mathcal{L}_\omega(\mathcal{P})$ for $\omega \in \{ p2p, q, m \}$, where $\bar{w}_1$ extends $w_1$ and $k \stackrel{m_2}{\rightarrow} \ell$ appears in $\bar{w}_1$. This contradicts our assumptions on $w w_1 w^\prime$.\qed

\end{proof}

\subsection{Sufficient Conditions for Realisability}

In \cite[Thm.1]{schewe:foiks2020} we also proved that the two necessary conditions for realisability, the sequence and the choice conditions, together are also sufficient. We will now generalise this result for CSCs, which gives the main result of this article.

\begin{theorem}\label{thm-realisability}

A CSC $\mathcal{C} = (Q, q_0, M, \delta)$ is realisable with respect to p2p, queue or mailbox semantics, if and only if it satisfies the sequence and the choice conditions.

\end{theorem}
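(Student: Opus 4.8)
The plan is to prove the two directions separately. Necessity is already in hand: Propositions~\ref{prop-sequence} and~\ref{prop-choice} show that a realisable CSC must satisfy the sequence and the choice condition, so nothing further is needed for the ``only if'' part. For the ``if'' part I would first invoke Proposition~\ref{prop-lang-realisable} to reduce realisability to language-synchronisability, so that it suffices to establish $\mathcal{L}_0(\mathcal{P}) = \mathcal{L}_\omega(\mathcal{P})$ for the projected system $\mathcal{P} = \{\mathcal{C}_p\}_{p\in P}$ and each $\omega\in\{p2p,q,m\}$. The inclusion $\mathcal{L}_0\subseteq\mathcal{L}_\omega$ is routine, since every rendez-vous run is mimicked by an asynchronous run that sends each message and immediately consumes it, leaving all channels and mailboxes empty between steps. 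Moreover, replaying the send/receive events of any p2p or queue run in exactly the same order is a legal mailbox run, because a message that is the front of a FIFO channel is in particular present in the corresponding multiset; hence $\mathcal{L}_{p2p}\subseteq\mathcal{L}_m$ and $\mathcal{L}_q\subseteq\mathcal{L}_m$. Together with $\mathcal{L}_0\subseteq\mathcal{L}_\omega$ this collapses the entire ``if'' part to the single inclusion $\mathcal{L}_m\subseteq\mathcal{L}_0$, from which $\mathcal{L}_0 = \mathcal{L}_{p2p} = \mathcal{L}_q = \mathcal{L}_m$ follows at once.

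The core of the argument is to establish $\mathcal{L}_m\subseteq\mathcal{L}_0$. Using Proposition~\ref{prop-csc-p2p} I would rephrase the target as $\mathcal{L}_m(\mathcal{P})\subseteq\mathcal{L}(\mathcal{C})$, i.e. every send-trace obtainable from a mailbox run of $\mathcal{P}$ is already produced by a run of the choreography $\mathcal{C}$ itself. I would fix $w\in\mathcal{L}_m(\mathcal{P})$ together with a witnessing mailbox run and construct a CSC run with send-trace $w$ by induction on its length, maintaining the invariant that the prefix processed so far forms a $\mathcal{C}$-path ending in a reachable control state $q$, and that in the mailbox run every peer sits at the control state that this path assigns to it. In the induction step I take the next bundle of sends fired by the mailbox run and show that it is the send-multiset of a transition set $\tau\in\delta$ enabled at $q$ whose matching receives are simultaneously enabled, so that the underlying selection is closed under pairing in the sense of Definition~\ref{def-pairing} and advances $q$ along a genuine $\mathcal{C}$-transition.

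This is exactly where the two conditions enter, each eliminating one obstruction to the invariant. The sequence condition (Definition~\ref{def-sequence}) rules out a lagging receiver: if the send under consideration is not independent of its predecessor in $w$ --- so that it cannot simply be absorbed into the same parallel transition set via the diamond guaranteed by the condition --- then its sender must coincide with the sender or the receiver of that predecessor, which forces that sender to have already executed, and in the receiver case to have already consumed, the predecessor message; hence no peer can emit a message from a control state that the choreography has not yet reached. The choice condition (Definition~\ref{def-choice}) rules out a distributed branch disagreement: whenever two transitions leave $q$ towards different successors, the condition makes them either independent, so that they coexist as parallel updates and no real choice is taken, or issued by the same sender, so that a single peer commits the branch and no two peers can pick incompatible alternatives. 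Independent messages are precisely those collected into one transition set and therefore commute, so the arbitrary order in which the mailbox run interleaves them, and the arbitrary order in which it draws them from the multisets, all collapse onto the same choreography step.

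I expect the main obstacle to be the bookkeeping behind this last point: verifying that the mandatory bundling of parallel messages inside a choreography transition set survives projection and asynchronous re-interleaving, and that under mailbox reception --- where messages may be taken out in any order and receivers may run arbitrarily far ahead of or behind senders, with local states living in the shared state set $Q$ modulo the eliminated $\epsilon$-transitions --- the consistency invariant is genuinely preserved across a parallel step. The delicate case is a step in which several peers simultaneously fire bundles drawn from \emph{different} transition sets while some earlier messages are still unreceived; here I would have to argue, using both conditions at once, that the union of these per-peer bundles is itself the send-multiset of a single transition set of $\mathcal{C}$ that is closed under pairing, so that it replays as one synchronous choreography step. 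Once $\mathcal{L}_m\subseteq\mathcal{L}_0$ is secured in this most liberal semantics, the reduction of the first paragraph delivers realisability uniformly for p2p, queue and mailbox semantics, completing the characterisation.
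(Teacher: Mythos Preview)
Your overall scaffolding matches the paper's: Propositions~\ref{prop-sequence} and~\ref{prop-choice} for necessity, Proposition~\ref{prop-lang-realisable} to reduce realisability to language-synchronisability, and an induction on trace length for $\mathcal{L}_\omega\subseteq\mathcal{L}_0$; collapsing all three semantics to the mailbox case via $\mathcal{L}_{p2p},\mathcal{L}_q\subseteq\mathcal{L}_m$ is a clean simplification the paper does not bother with. The gap is in the induction step. Your invariant---that after the prefix ``every peer sits at the control state that this path assigns to it'' and that the next bundle is the send-multiset of a single $\tau\in\delta$ whose ``matching receives are simultaneously enabled''---already fails on the two-step choreography $\tau_1=\{(q_0,\,a\stackrel{m_1}{\rightarrow}b,\,q_1)\}$, $\tau_2=\{(q_1,\,a\stackrel{m_2}{\rightarrow}b,\,q_2)\}$, which satisfies both conditions. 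After $a$ fires $!m_1$ asynchronously, $a$ is at $q_1$ while $b$ is still at $q_0$, so no single choreography state covers both; and at the next send the bundle $\{m_2\}$ does match $\tau_2$, but $b$'s receive $(q_1,?m_2,q_2)$ is not enabled. You yourself concede that receivers may lag arbitrarily, which is exactly what breaks the invariant; more generally one asynchronous step need not correspond to one choreography step, so ``closed under pairing'' cannot be verified step-by-step against the mailbox run.

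The paper avoids this by never tracking the asynchronous peers' control states at all. It writes $w=w'\bar w$ with $w'=w''\ddot w$, uses the induction hypothesis only to place $w'$ (hence $w''$) in $\mathcal{L}_0$, and then branches on whether $w''\bar w\in\mathcal{L}_\omega$. When it is (Case~2a), the choice condition applied at the last rendez-vous state lets $\ddot w$ absorb the independent part of $\bar w$ into a single rendez-vous step, with the same-sender remainder pushed to a following step; when it is not (Case~2b), the sequence condition forces each sender appearing in $\bar w$ to coincide with a sender or receiver in $\ddot w$, and the rendez-vous run is again extended in up to two steps. The missing idea in your plan is precisely this restructuring---a single asynchronous step may be split across, or folded into, several synchronous ones---together with the case split on $w''\bar w$ that replaces your unmaintainable global-state invariant.
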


\begin{proof}

The necessity of the sequence and choice conditions was already proven in Propositions \ref{prop-sequence} and \ref{prop-choice}. For the sufficience let $\mathcal{P} = \{ \mathcal{C}_p \}_{p \in P}$ be the P2P system defined by the projection of the CSC $\mathcal{C}$. 

It suffices to show $\mathcal{L}_\omega(\mathcal{P}) \subseteq \mathcal{L}_0(\mathcal{P})$, where $\omega \in \{ p2p, q, m \}$. As trace languages are prefix-closed, we proceed by induction on the length of sequences of messages. The induction base for the empty sequence $\epsilon$ is trivial.

Now assume $w = w^\prime \bar{w} \in \mathcal{L}_\omega(\mathcal{P})$. Then $w$ is produced by a run $S_0, \dots, S_{k+1}$ of $\mathcal{P}$ in p2p, queue or mailbox semantics, respectively, with message sets $M_i$ ($i=0,\dots,k$) such that $\bar{w} \in \bar{M}_k$ and $w^\prime \in \mathcal{L}_\omega(\mathcal{P})$. By induction we have $w^\prime \in \mathcal{L}_0(\mathcal{P})$, so there is also a run $S_0^\prime, \dots, S_{k^\prime+1}^\prime$ of $\mathcal{P}$ in rendez-vous semantics corresponding to $w^\prime$ with associated message sets $M_i^\prime$ ($i=0,\dots,k^\prime$).

\paragraph{\textbf{Case 1.}}

Assume $w^\prime = \epsilon$. If we have $S_1 = S_0 + \Delta_0$, then $\Delta_0 = \bigcup_{p \in P_0} \Delta_{\tau_p}(S_0)$ holds with $\tau_p \in \delta_p$, and $\Delta_{\tau_p}(S_0) = \bigcup_{t \in \tau_p, t \in E_p(S_0)} \Delta_t(S_0)$.

The enabled transitions in $\tau_i$ must be sending transitions of the form $(q_{i,0}, !m^{i \rightarrow j} q_{i,1})$; there cannot be any receiving transitions, because in p2p, queue or mailbox semantics messages first must be sent, before they can be received. As $\mathcal{P}$ is choreography-defined, each transition set $\tau_i$ is a projection of some $\tau^{(i)} \in \delta$, so we have transitions $(q_{i,0}, i \stackrel{m}{\rightarrow} j, q_{i,1}) \in \tau^{(i)}$ and $q_{i,0} = q_0$ for all $i \in P$.

Due to the choice condition either all these transitions occur in the same $\tau \in \delta$ or their messages all have the same sender $k \in P$. 

In the former case, whenever $(q_{i,0}, !m^{i \rightarrow j}, q_{i,1})$ is enabled in rendez-vous semantics, then also $(q_{i,0}, ?m^{i \rightarrow j}, q_{i,1})$ is enabled, which implies $\bar{w} \in \mathcal{L}_0(\mathcal{P})$. In the latter case, whenever $(q_{i,0}, !m^{i \rightarrow j}, q_{i,1})$ is enabled in rendez-vous semantics, then $i = k$ and $j \neq k$ and $(q_{j,0}, ?m^{i \rightarrow j}, q_{j,1})$ is also enabled in rendez-vous semantics, which implies again $\bar{w} \in \mathcal{L}_0(\mathcal{P})$.

\paragraph{\textbf{Case 2.}}

Now assume $w^\prime \neq \epsilon$. As the run $S_0, \dots, S_k$ in p2p, queue or mailbox semantics corresponds to $w^\prime \in \mathcal{L}_\omega(\mathcal{P})$, let $S_0, \dots, S_{k-1}$ correspond to $w^{\prime\prime}$, i.e. $w^\prime = w^{\prime\prime} \ddot{w}$. We distinguish two subcases.

\paragraph{\textbf{Case 2a.}}

Assume $w^{\prime\prime} \bar{w} \in \mathcal{L}_\omega(\mathcal{P})$. Then by induction $w^{\prime\prime} \bar{w} \in \mathcal{L}_0(\mathcal{P})$.

As also $w^\prime = w^{\prime\prime} \ddot{w} \in \mathcal{L}_0(\mathcal{P})$ holds, the run $S_0^\prime, \dots, S_{k^\prime}^\prime$ of $\mathcal{P}$ in rendez-vous semantics corresponding to $w^{\prime\prime}$ has two possible continuations $S_0^\prime, \dots, S_{k^\prime+1}^\prime$ and $S_0^\prime, \dots, S_{k^\prime}^{\prime\prime}$ corresponding to $w^\prime$ and $w^{\prime\prime} \bar{w}$, respectively. Let the update sets in the last state transitions in these runs be $\Delta_{k^\prime}^\prime$ and $\Delta_{k^\prime}^{\prime\prime}$ , respectively, i.e. $S_{k^\prime+1}^\prime = S_{k^\prime}^\prime + \Delta_{k^\prime}^\prime$ and $S_{k^\prime+1}^{\prime\prime} = S_{k^\prime}^\prime + \Delta_{k^\prime}^{\prime\prime}$.

We have $\Delta_{k^\prime}^\prime = \bigcup_{p \in P_{k^\prime}} \Delta_{a(k^\prime,p)}$ with update sets \[ \Delta_{a(k^\prime,p)} = \Delta_{\tau_p}(S_{a(k^\prime,p)}^\prime) = \bigcup_{{t \in \tau_p} \atop {t \in E_p(S_{a(k^\prime,p)}^\prime)}} \Delta_t(S_{a(k^\prime,p)}^\prime) \in \boldsymbol{\Delta}_p(S_{a(k^\prime,p)}^\prime) \; , \]
and analogously, $\Delta_{k^\prime}^{\prime\prime} = \bigcup_{p \in P_{k^\prime}^\prime} \Delta_{a^\prime(k^\prime,p)}$ with update sets \[ \Delta_{a^\prime(k^\prime,p)} = \Delta_{\tau_p^\prime}(S_{a^\prime(k^\prime,p)}^\prime) = \bigcup_{{t \in \tau_p^\prime} \atop {t \in E_p(S_{a^\prime(k^\prime,p)}^\prime)}} \Delta_t(S_{a^\prime(k^\prime,p)}^\prime) \in \boldsymbol{\Delta}_p(S_{a^\prime(k^\prime,p)}^\prime) \; . \]

An enabled transition $t$ in $\tau_p$ or $\tau_p^\prime$ has either the form $(q_p, !m_1^{p \rightarrow p^\prime}, q_p^\prime)$ or $(q_p, ?m_2^{p^\prime \rightarrow p}, q_p^{\prime\prime})$. As $\mathcal{P}$ is choreography-defined, we also have $(q_p, p \stackrel{m_1}{\rightarrow} p^\prime, q_p^\prime) \in \tau_1 \in \delta$ and $(q_p, p^\prime \stackrel{m_2}{\rightarrow} p, q_p^{\prime\prime}) \in \tau_2 \in \delta$. Furthermore, if a transition with sending event $!m_1^{p \rightarrow p^\prime}$ appears in $\tau_p$ (or in $\tau_p^\prime$, respectively), then also the corresponding receiving event $?m_1^{p \rightarrow p^\prime}$ appears in $\tau_{p^\prime}$ (or in $\tau_{p^\prime}^\prime$, respectively).

The messages $i \stackrel{m_1}{\rightarrow} j$ in $\ddot{w}$ correspond to pairs of transitions $(q_i, !m_1^{i \rightarrow j}, q_i^\prime) \in \tau_i$ and $(q_j, ?m_1^{i \rightarrow j}, q_j^\prime) \in \tau_j$ and thus to $(q_i, i \stackrel{m_1}{\rightarrow} j, q_i^\prime) \in \tau \in \delta$ with $q_i = q_j$ and $q_i^\prime = q_j^\prime$. Likewise, the messages $k \stackrel{m_2}{\rightarrow} \ell$ in $\bar{w}$ correspond to pairs of transitions $(q_k, !m_2^{k \rightarrow \ell}, q_k^{\prime\prime}) \in \tau_k^\prime$ and $(q_\ell, ?m_2^{k \rightarrow \ell}, q_\ell^{\prime\prime}) \in \tau_\ell^\prime$ and thus to $(q_k, k \stackrel{m_2}{\rightarrow} \ell, q_k^{\prime\prime}) \in \tau^\prime \in \delta$ with $q_k = q_\ell$ and $q_k^{\prime\prime} = q_\ell^{\prime\prime}$.

If all control states $q_i$ appearing in $\tau_i$ and all $q_k$ appearing in $\tau_k^\prime$ are pairwise different, we can build a new update set
\[ \bar{\Delta}_{k^\prime} = \bigcup_{p \in P_{k^\prime}} \Delta_{a(k^\prime,p)} \cup \bigcup_{p \in P_{k^\prime}^\prime} \Delta_{a^\prime(k^\prime,p)} \]
in $S_{k^\prime}^\prime$, which gives rise to a run $S_0^\prime, \dots, S_{k^\prime}^\prime, \bar{S}_{k^\prime + 1}$ in rendez-vous semantics with $\bar{S}_{k^\prime + 1} = S_{k^\prime}^\prime + \bar{\Delta}_{k^\prime}$. The message set $M_{k^\prime}$ in the last step defines $\ddot{w} \bar{w} \in \bar{M}_{k^\prime}$, hence $w^\prime \bar{w} = w^{\prime\prime} \ddot{w} \bar{w} \in \mathcal{L}_0(\mathcal{P})$.

However, if the sets of control states $q_i$ and $q_k$, respectively, overlap, the control-state choreography contains choices $(q, i \stackrel{m_1}{\rightarrow} j, q^\prime) \in \tau \in \delta$ and $(q, k \stackrel{m_2}{\rightarrow} \ell, q^{\prime\prime}) \in \tau^\prime \in \delta$. As $\mathcal{C}$ satisfies the choice condition, we either have $(q, k \stackrel{m_2}{\rightarrow} \ell, q^\prime) \in \tau$ and $(q, i \stackrel{m_1}{\rightarrow} j, q^{\prime\prime}) \in \tau^\prime$ or $k = i$.

For the former cases we can take a subset $P_{k^\prime}^{\prime\prime} = \{ k \in P_{k^\prime}^\prime \mid (q_k, !m_2^{k \rightarrow \ell}, q_k^\prime) \in \tau_k \}$, which allows us to build the update set
\[ \bar{\Delta}_{k^\prime} = \bigcup_{p \in P_{k^\prime}} \Delta_{a(k^\prime,p)} \cup \bigcup_{p \in P_{k^\prime}^{\prime\prime}} \Delta_{a^\prime(k^\prime,p)} \]
in $S_{k^\prime}^\prime$. We obtain a run $S_0^\prime, \dots, S_{k^\prime}^\prime, \bar{S}_{k^\prime + 1}$ of $\mathcal{P}$ in rendez-vous semantics with $\bar{S}_{k^\prime + 1} = S_{k^\prime}^\prime + \bar{\Delta}_{k^\prime}$ and $w^\prime \bar{w}^\prime \in \mathcal{L}_0(\mathcal{P})$, where $\bar{w}^\prime$ is a subsequence of $\bar{w}$. 

For the case $k = i$ we have $(q_i, !m_1^{i \rightarrow j}, q_i^\prime) \in \tau_i$ and $(\bar{q}_i, !m_2^{i \rightarrow \ell}, q_i^{\prime\prime}) \in \tau_i^\prime$. As the sequence $i \stackrel{m_1}{\rightarrow} j \; i \stackrel{m_2}{\rightarrow} \ell$ appears in $\bar{w}$ in some order, there must exist a sequence of control states $q_{i,1} , \dots, q_{i,m}$ with $q_{i,1} = q_i^\prime$, $q_{i,m} = \bar{q}_i$ such that $(q_{i,x}, ?m_x^{j \rightarrow i}, q_{i,x+1}) \in \tau^{(x)} \in \delta_i$ holds for $i = 1, \dots, m-1$. That is, there is a sequence of enabled transitions between $(q_i, !m_1^{i \rightarrow j}, q_i^\prime)$ and $(\bar{q}_i, !m_2^{i \rightarrow \ell}, q_i^{\prime\prime})$. Due to the requirement on $w^\prime \bar{w} \in \mathcal{L}_\omega(\mathcal{P})$ only receiving events can appear in this sequence, and due to the sequence condition the sender must always be $j$. However, the corresponding sending events must then also appear in enabled transitions following $(q_i, !m_1^{i \rightarrow j}, q_i^\prime)$, which contradicts that $i \stackrel{m_1}{\rightarrow} j \; i \stackrel{m_2}{\rightarrow} \ell$ appears in $\bar{w}$, unless we have $m = 1$, i.e. $q_i^\prime = \bar{q}_i$.

Then $P_{k^\prime}^\prime - P_{k^\prime}^{\prime\prime}$ defines an update set $\bar{\Delta}_{k^\prime + 1} = \bigcup_{p \in P_{k^\prime}^\prime - P_{k^\prime}^{\prime\prime}} \Delta_{a^\prime(k^\prime,p)}$ in $\bar{S}_{k^\prime + 1}$ and an extended run $S_0^\prime, \dots, S_{k^\prime}^\prime, \bar{S}_{k^\prime + 1}, \bar{S}_{k^\prime+2}$ in rendez-vous semantics with $\bar{S}_{k^\prime+2} = \bar{S}_{k^\prime + 1} + \bar{\Delta}_{k^\prime + 1}$, andv $w^\prime \bar{w}^\prime \bar{w}^{\prime\prime} = w^\prime \bar{w} \in \mathcal{L}_0(\mathcal{P})$, as $\bar{w}^{\prime\prime}$ contains those messages from $\bar{w}$ not in $\bar{w}^\prime$, i.e. $\bar{w}^\prime \bar{w}^{\prime\prime} = \bar{w}$.

\paragraph{\textbf{Case 2b.}}

Now assume that $w^{\prime\prime} \bar{w} \notin \mathcal{L}_\omega(\mathcal{P})$ holds. For $w^\prime = w^{\prime\prime} \ddot{w} \in \mathcal{L}_0(\mathcal{P})$ we obtain a run $S_0^\prime, \dots, S_{k^\prime}^\prime$ of $\mathcal{P}$ in rendez-vous semantics, where the last step gives rise to $\ddot{w}$. Let the update set in the last step be $\Delta_{k^\prime}^\prime$, i.e. $S_{k^\prime+1}^\prime = S_{k^\prime}^\prime + \Delta_{k^\prime}^\prime$ with $\Delta_{k^\prime}^\prime = \bigcup_{p \in P_{k^\prime}} \Delta_{a(k^\prime,p)}$ and update sets 
\[ \Delta_{a(k^\prime,p)} = \Delta_{\tau_p}(S_{a(k^\prime,p)}^\prime) = \bigcup_{{t \in \tau_p} \atop {t \in E_p(S_{a(k^\prime,p)}^\prime)}} \Delta_t(S_{a(k^\prime,p)}^\prime) \in \boldsymbol{\Delta}_p(S_{a(k^\prime,p)}^\prime) \; . \]
Analogously, $S_{k^\prime}^\prime = S_{k^\prime-1}^\prime + \Delta_{k^\prime-1}^\prime$ with $\Delta_{k^\prime-1}^\prime = \bigcup_{p \in P_{k^\prime-1}} \Delta_{a(k^\prime-1,p)}$ and update sets 
\[ \Delta_{a(k^\prime-1,p)} = \Delta_{\tau_p^\prime}(S_{a(k^\prime-1,p)}^\prime) = \bigcup_{{t \in \tau_p^\prime} \atop {t \in E_p(S_{a(k^\prime-1,p)}^\prime)}} \Delta_t(S_{a(k^\prime-1,p)}^\prime) \in \boldsymbol{\Delta}_p(S_{a(k^\prime-1,p)}^\prime) \; . \]

If $i \stackrel{m_1}{\rightarrow} j$ appears in $\bar{w}$ and $k \stackrel{m_2}{\rightarrow} \ell$ appears in $\ddot{w}$, then there are transitions $(q_i, !m_1^{i \rightarrow j} q_i^\prime) \in \tau_i$ and $(q_k, !m_2^{k \rightarrow \ell} q_k^\prime) \in \tau_k^\prime$. As $\mathcal{P}$ is choreography-defined, we also have $(q_i, i \stackrel{m_1}{\rightarrow} j, q_i^\prime) \in \tau \in \delta$ and $(q_k, k \stackrel{m_2}{\rightarrow} \ell, q_k^\prime) \in \tau^\prime \in \delta$. We further have $(q_i, ?m_1^{i \rightarrow j}, q_i^\prime) \in \tau_j$ and $(q_k, ?m_2^{k \rightarrow \ell}, q_k^\prime) \in \tau_\ell$.

There must be at least one pair with $q_k^\prime = q_i$, i.e. $\mathcal{C}$ contains a sequence $(q_k, k \stackrel{m_2}{\rightarrow} \ell, q_i) \in \tau^\prime \in \delta$ and $(q_i, i \stackrel{m_1}{\rightarrow} j, q_i^\prime) \in \tau \in \delta$.

First consider the set $P_{k^\prime}^\prime$ of those peers $i$, for which $q_i \neq q_k^\prime$ holds for all $k$. Then the transitions $(q_i, !m_1^{i \rightarrow j} q_i^\prime) \in \tau_i$ and $(q_i, ?m_1^{i \rightarrow j} q_i^\prime) \in \tau_j$ are already enabled in $S_{k^\prime-1}^\prime$, and we can build the update set
\[ \bar{\Delta}_{k^\prime} = \bigcup_{p \in P_{k^\prime-1}} \Delta_{a(k^\prime-1,p)} \cup \bigcup_{p \in P_{k^\prime}^\prime} \Delta_{a(k^\prime,p)} \; . \]
With this we obtain a run $S_0^\prime, \dots, S_{k^\prime-1}^\prime, \bar{S}_{k^\prime}$ of $\mathcal{P}$ in rendez-vous semantics with $\bar{S}_{k^\prime} = S_{k^\prime-1}^\prime + \bar{\Delta}_{k^\prime}$, and hence $w^\prime \bar{w}^\prime \in \mathcal{L}_0(\mathcal{P})$, where $\bar{w}^\prime$ is the sequence of messages $i \stackrel{m_1}{\rightarrow} j$ with $i \in P_{k^\prime}^\prime$ (and also $j \in P_{k^\prime}^\prime$).

Let $P_{k^\prime}^{\prime\prime} = P_{k^\prime} - P_{k^\prime}^\prime$ be the set of the remaining peers. As $\mathcal{C}$ satisfies the sequence condition, we either have $(q_k, i \stackrel{m_1}{\rightarrow} j, q_i) \in \tau^\prime$ or $i \in \{ k, \ell \}$. Without loss of generality we can ignore the former case, as $k, \ell$ could be added to $P_{k^\prime}^\prime$ using the same argument as before. In the latter case both $(q_i, !m_1^{i \rightarrow j} q_i^\prime)$ and $(q_i, ?m_1^{i \rightarrow j} q_i^\prime)$ are enabled in $\bar{S}_{k^\prime}$, which defines an update set $\bar{\Delta}_{k^\prime+1} = \bigcup_{p \in P_{k^\prime}^{\prime\prime}} \Delta_{a(k^\prime,p)}$ and a run $S_0^\prime, \dots, S_{k^\prime-1}^\prime, \bar{S}_{k^\prime}, \bar{S}_{k^\prime+1}$ of $\mathcal{P}$ in rendez-vous semantics with $\bar{S}_{k^\prime+1} = \bar{S}_{k^\prime} + \bar{\Delta}_{k^\prime+1}$. Hence $w^\prime \bar{w}^\prime \bar{w}^{\prime\prime} \in \mathcal{L}_0(\mathcal{P})$ with $\bar{w}^\prime \bar{w}^{\prime\prime} = \bar{w}$.\qed

\end{proof}

\section{Extensions}\label{sec:ext}

We now discuss briefly two extensions. First we may drop the restriction that the set $P$ of peers must be finite. Another extension concerns the languages defined by P2P systems, where we may drop the restriction to finite sequences of messages.

\paragraph{Infinite P2P Systems.}

P2P systems and choreographies as defined in \cite{schewe:foiks2020} assume finite sets of peers. However, this is not necessary in general. The general behavioural theory of concurrent systems \cite{boerger:acin2016} permits also countable infinite sets of agents, but restricts the number of agents to be involved in one state transition in a concurrent run to be finite. Our semantics of P2P systems in Definition \ref{def-run} is a simplified version of this notion of concurrent run, so the extension towards infinite sets of peers will merely have to modify this definition.

For a start we allow the set $P$ of peers to be infinite. Then it also makes sense to allow the set $M$ of messages to be infinite. In a P2P system as defined in Definition \ref{def-p2p} this will lead to infinite alphabets $\Sigma_p$. Thus, we also modify Definition \ref{def-ctl-machine} such that the alphabet $\Sigma$ may be infinite, while the set $Q$ of control states should remain to be finite. Then $\delta$ can be a finite or countable infinite set of finite transition relations $\tau \subseteq Q \times \Sigma \times Q$. This reflects the intention that in one step only finitely many messages can be handled in parallel, whereas it does not do any harm to permit infinitely many choices. 

All following definitions concerning enabled transitions and update sets will remain unchanged. Finally, we only have to modify Definition \ref{def-run} such that the sets $P_i$ of peers used in a transition from state $S_i$ to $S_{i+1}$ must be finite.

These extensions of the basic definitions of control state machines, P2P systems and their p2p, queue, mailbox and rendez-vous semantics carry forward to control-state choreographies. As transition sets $\tau \in \delta$ will always be finite, it is guaranteed that also in runs of a CSC $\mathcal{C}$ only finitely many updates are applied in any state transition from $S_i$ to $S_{i+1}$, i.e. there is no need to amend Definition \ref{def-csc-run}.

With these extended definitions we can explore changes to the proofs of Propositions \ref{prop-p2p-csc}, \ref{prop-csc-p2p}, \ref{prop-lang-realisable}, \ref{prop-sequence} and \ref{prop-choice} and of Theorem \ref{thm-realisability}. In all these proofs we investigate the defining runs, so the finiteness restrictions imposed in our extended definitions will be preserved. That is, all these proofs remain valid for infinite sets of peers. We omit further details.

\paragraph{Infinite Traces.}

Following the tradition in P2P systems, choreographies and the clarifications by Finkel and Lozes all trace languages contain finite sequences of messages. This is reflected in the definitions of the trace languages $\mathcal{L}_\omega(\mathcal{P})$ with $\omega \in \{ p2p, q, m \}$ and $\mathcal{L}_0(\mathcal{P})$ at the end of Section \ref{sec:p2p}. 

However, it is also perfectly possible to consider also P2P systems that are supposed not to terminate giving rise to also infinite sequences of messages. For a run $R= S_0, S_1, \dots$ let $\mathcal{L}(R)$ be the set of all sequences with elements in $M$ that result from concatenation of $\hat{M}_0, \hat{M}_1, \dots$. Finally, define $\mathcal{L}_{\infty}$ as the language of all sequences in $\bigcup_{R} \mathcal{L}(R)$, where the union is built over all runs of the P2P systems. We write $\mathcal{L}_{0,\infty}$, $\mathcal{L}_{p2p, \infty}$, $\mathcal{L}_{q,\infty}$ and $\mathcal{L}_{m,\infty}$ for these languages in rendez-vous, p2p, queue and mailbox semantics, respectively. We call these languages the {\em infinite trace languages} of the P2P system in the different semantics. Analogously, we can define the {\em stable infinite trace languages} of the P2P system, and denote them as $\hat{\mathcal{L}}_{0,\infty}$, $\hat{\mathcal{L}}_{p2p, \infty}$, $\hat{\mathcal{L}}_{q,\infty}$ and $\hat{\mathcal{L}}_{m,\infty}$, respectively.

We can then explore how the proofs of Propositions \ref{prop-p2p-csc}, \ref{prop-csc-p2p}, \ref{prop-lang-realisable}, \ref{prop-sequence} and \ref{prop-choice} and of Theorem \ref{thm-realisability} are affected, if instead of the trace languages we exploit the infinite trace languages. Clearly, the proofs of Propositions \ref{prop-p2p-csc}, \ref{prop-csc-p2p}, \ref{prop-sequence} and \ref{prop-choice} can remain unchanged, as it suffices to look only locally at the defining runs. The argument in the proof of Proposition \ref{prop-lang-realisable} remains also valid, as stable finite states are only relevant for the finite sequences in the languages.

This leaves us with the proof of Theorem \ref{thm-realisability}, where in addition to the induction proof for finite sequences, which remains valid without any need for a change, we have to consider infinite sequences. For these every finite prefix will be in $\mathcal{L}_\omega(\mathcal{P})$ with $\omega \in \{ p2p, q, m \}$ and hence in $\mathcal{L}_0(\mathcal{P})$ due to the already conducted proof. This immediately implies that the infinite sequence will be in $\mathcal{L}_{0,\infty}(\mathcal{P})$. This shows that Theorem \ref{thm-realisability} also holds for the infinite trace languages.

\section{Conclusions}\label{sec:fin}

Choreographies prescribe the rendez-vous synchronisation of messages in P2P systems defined by communicating finite state machines. The synchronisability problem is to decide, whether a reification by asynchronous peers operating on message queues or mailboxes is equivalent to the rendez-vous composition of the peers. This problem has a long tradition marked with rather vague formalisations and wrong decidability claims \cite{basu:tcs2016}. In 2017 Finkel and Lozes finally proved undecidability in general \cite{finkel:icalp2017}. 

However, all counterexamples were somehow artificial leaving the impression that under proper restrictions it should be possible to achieve decidability. This was shown in \cite{schewe:foiks2020} for choreography-defined P2P systems. Decisive for this result was that if we start from a choreography, we only need to consider P2P systems defined by projection, and for these synchronisability coincides with language synchronisability. On these grounds it was possible to discover two necessary conditions for realisability, the sequence and the choice condition, which together are also sufficient. This closed the question of the realisability of choreographies.

Nonetheless, despite its solid theory we felt that the value of this result is limited, as the notion of choreography defined by FSMs is rather weak; it enforces a strictly sequential behaviour of the P2P system. Therefore, we asked how a more powerful notion of choreography could be defined, for which the realisability results could be preserved. In this article we introduced control state choreographies, for which this is the case. Basically, we just replaced FSMs by control-state machines, which introduce parallelism into the P2P systems. On these grounds the results from \cite{schewe:foiks2020} could be generalised to CSCs using slightly generalised definitions for the sequence and choice condition.

Still the parallelism and the choice in choreographies and P2P systems is bounded. So a natural continuation of the research would be to look at unbounded choice and unbounded parallelism. However, this requires to consider not just messages, but also other state changes (see e.g. the handling of control-state ASMs in \cite{boerger:2003}). In this context it becomes questionnable to emphasise synchronisability. In fact, it seems reasonable to focus on realisable choreographies only on high levels of abstraction, whereas in a thorough refinement process may weaker synchronisation properties are well acceptable. See e.g. the study on memory-management systems \cite{schewe:medi2019}, where this was the case.

\bibliographystyle{acm}
\bibliography{choreo}

\begin{thebibliography}{10}

\bibitem{alur:tse2003}
{\sc Alur, R., Etessami, K., and Yannakakis, M.}
\newblock Inference of message sequence charts.
\newblock {\em {IEEE} Trans. Software Eng. 29}, 7 (2003), 623--633.

\bibitem{basu:www2011}
{\sc Basu, S., and Bultan, T.}
\newblock Choreography conformance via synchronizability.
\newblock In {\em Proceedings of the 20th International Conference on World
  Wide Web ({WWW} 2011)\/} (2011), S.~Srinivasan et~al., Eds., {ACM},
  pp.~795--804.

\bibitem{basu:fase2016}
{\sc Basu, S., and Bultan, T.}
\newblock Automated choreography repair.
\newblock In {\em Fundamental Approaches to Software Engineering -- 19th
  International Conference ({FASE} 2016)\/} (2016), P.~Stevens and A.~Wasowski,
  Eds., vol.~9633 of {\em Lecture Notes in Computer Science}, Springer,
  pp.~13--30.

\bibitem{basu:tcs2016}
{\sc Basu, S., and Bultan, T.}
\newblock On deciding synchronizability for asynchronously communicating
  systems.
\newblock {\em Theor. Comput. Sci. 656\/} (2016), 60--75.

\bibitem{basu:popl2012}
{\sc Basu, S., Bultan, T., and Ouederni, M.}
\newblock Deciding choreography realizability.
\newblock In {\em Proceedings of the 39th {ACM} {SIGPLAN-SIGACT} Symposium on
  Principles of Programming Languages ({POPL} 2012)\/} (2012), J.~Field and
  M.~Hicks, Eds., {ACM}, pp.~191--202.

\bibitem{benabdallah:tacas1997}
{\sc Ben-Abdallah, H., and Leue, S.}
\newblock Syntactic detection of process divergence and non-local choice in
  message sequence charts.
\newblock In {\em Tools and Algorithms for Construction and Analysis of Systems
  ({TACAS} '97)\/} (1997), E.~Brinksma, Ed., vol.~1217 of {\em Lecture Notes in
  Computer Science}, Springer, pp.~259--274.

\bibitem{benyagoub:jsep2019}
{\sc Benyagoub, S., {A\"{\i}t Ameur}, Y., Ouederni, M., Mashkoor, A., and
  Medeghri, A.}
\newblock Formal design of scalable conversation protocols using {Event-B}:
  Validation, experiments and benchmarks.
\newblock {\em Journal of Software: Evolution and Process 32}, 2 (2020).

\bibitem{benyagoub:abz2020}
{\sc Benyagoub, S., {A\"{\i}t Ameur}, Y., and Schewe, K.-D.}
\newblock {Event-B}-supported choreography-defined communicating systems --
  correctness and completeness.
\newblock In {\em Rigorous State-Based Methods - 7th International Conference
  ({ABZ} 2020)\/} (2020), A.~Raschke, D.~M\'{e}ry, and F.~Houdek, Eds.,
  vol.~12071 of {\em Lecture Notes in Computer Science}, Springer,
  pp.~155--168.

\bibitem{benyagoub:nfm2018}
{\sc Benyagoub, S., Ouederni, M., {A\"{\i}t Ameur}, Y., and Mashkoor, A.}
\newblock Incremental construction of realizable choreographies.
\newblock In {\em {NASA} Formal Methods -- 10th International Symposium ({NFM}
  2018)\/} (2018), A.~Dutle, C.~A. Mu{\~n}oz, and A.~Narkawicz, Eds.,
  vol.~10811 of {\em Lecture Notes in Computer Science}, Springer, pp.~1--19.

\bibitem{benyagoub:medi2016}
{\sc Benyagoub, S., Ouederni, M., Singh, N.~K., and {A\"{\i}t Ameur}, Y.}
\newblock Correct-by-construction evolution of realisable conversation
  protocols.
\newblock In {\em Model and Data Engineering -- 6th International Conference
  ({MEDI} 2016)\/} (2016), L.~Bellatreche et~al., Eds., vol.~9893 of {\em
  Lecture Notes in Computer Science}, Springer, pp.~260--273.

\bibitem{bocchi:lmcs2020}
{\sc Bocchi, L., Melgratti, H.~C., and Tuosto, E.}
\newblock On resolving non-determinism in choreographies.
\newblock {\em Log. Methods Comput. Sci. 16}, 3 (2020).

\bibitem{boerger:acin2016}
{\sc B\"{o}rger, E., and Schewe, K.-D.}
\newblock Concurrent {Abstract State Machines}.
\newblock {\em Acta Inf. 53}, 5 (2016), 469--492.

\bibitem{boerger:jucs2017}
{\sc B\"{o}rger, E., and Schewe, K.-D.}
\newblock Communication in {Abstract State Machines}.
\newblock {\em J. {UCS} 23}, 2 (2017), 129--145.

\bibitem{boerger:2003}
{\sc B{\"o}rger, E., and St{\"a}rk, R.~F.}
\newblock {\em Abstract {S}tate {M}achines. A Method for High-Level System
  Design and Analysis}.
\newblock Springer, 2003.

\bibitem{brand:jacm1983}
{\sc Brand, D., and Zafiropulo, P.}
\newblock On communicating finite-state machines.
\newblock {\em J. {ACM} 30}, 2 (1983), 323--342.

\bibitem{carbone:entcs2007}
{\sc Carbone, M., Honda, K., and Yoshida, N.}
\newblock A calculus of global interaction based on session types.
\newblock {\em Electron. Notes Theor. Comput. Sci. 171}, 3 (2007), 127--151.

\bibitem{chambart:concur2008}
{\sc Chambart, P., and Schnoebelen, P.}
\newblock Mixing lossy and perfect fifo channels.
\newblock In {\em Concurrency Theory, 19th International Conference ({CONCUR}
  2008)\/} (2008), F.~van Breugel and M.~Chechik, Eds., vol.~5201 of {\em
  Lecture Notes in Computer Science}, Springer, pp.~340--355.

\bibitem{clemente:concur2014}
{\sc Clemente, L., Herbreteau, F., and Sutre, G.}
\newblock Decidable topologies for communicating automata with {FIFO} and bag
  channels.
\newblock In {\em {CONCUR} 2014 - Concurrency Theory - 25th International
  Conference, {CONCUR} 2014, Rome, Italy, September 2-5, 2014. Proceedings\/}
  (2014), P.~Baldan and D.~Gorla, Eds., vol.~8704 of {\em Lecture Notes in
  Computer Science}, Springer, pp.~281--296.

\bibitem{finkel:icalp2017}
{\sc Finkel, A., and Lozes, {\'E}.}
\newblock Synchronizability of communicating finite state machines is not
  decidable.
\newblock In {\em 44th International Colloquium on Automata, Languages, and
  Programming ({ICALP} 2017)\/} (2017), I.~Chatzigiannakis et~al., Eds.,
  vol.~80 of {\em LIPIcs}, Schloss Dagstuhl -- Leibniz-Zentrum f\"ur
  Informatik, pp.~122:1--122:14.

\bibitem{guanciale:lamp2019}
{\sc Guanciale, R., and Tuosto, E.}
\newblock Realisability of pomsets.
\newblock {\em J. Log. Algebraic Methods Program. 108\/} (2019), 69--89.

\bibitem{honda:jacm2016}
{\sc Honda, K., Yoshida, N., and Carbone, M.}
\newblock Multiparty asynchronous session types.
\newblock {\em J. {ACM} 63}, 1 (2016), 9:1--9:67.

\bibitem{schewe:medi2018}
{\sc Schewe, K.-D.}
\newblock Extensions to hybrid {Event-B} to support concurrency in
  cyber-physical systems.
\newblock In {\em Model and Data Engineering -- 8th International Conference
  ({MEDI} 2018)\/} (2018), E.~H. Abdelwahed et~al., Eds., vol.~11163 of {\em
  Lecture Notes in Computer Science}, Springer, pp.~418--433.

\bibitem{schewe:foiks2020}
{\sc Schewe, K.-D., {A\"{\i}t Ameur}, Y., and Benyagoub, S.}
\newblock Realisability of choreographies.
\newblock In {\em Foundations of Information and Knowledge Systems (FoIKS
  2020)\/} (2020), A.~Herzig and J.~Kontinen, Eds., vol.~12012 of {\em Lecture
  Notes in Computer Science}, Springer, pp.~263--280.

\bibitem{schewe:corr2020}
{\sc Schewe, K.-D., {A\"{\i}t Ameur}, Y., and Benyagoub, S.}
\newblock Realisability of control-state choreographies.
\newblock {\em CoRR abs/2009.03623\/} (2020).

\bibitem{schewe:medi2021}
{\sc Schewe, K.-D., {A\"{\i}t Ameur}, Y., and Benyagoub, S.}
\newblock Realisability of control-state choreographies.
\newblock In {\em Model and Data Engineering ({MEDI} 2021)\/} (2021), J.~C.
  Attiogb\'{e} and S.~B. Yahia, Eds., vol.~12732 of {\em Lecture Notes in
  Computer Science}, Springer, pp.~87--100.

\bibitem{schewe:medi2019}
{\sc Schewe, K.-D., Prinz, A., and B\"{o}rger, E.}
\newblock Concurrent computing with shared replicated memory.
\newblock In {\em Model and Data Engineering - 9th International Conference
  ({MEDI} 2019)\/} (2019), K.-D. Schewe and N.~K. Singh, Eds., vol.~11815 of
  {\em Lecture Notes in Computer Science}, Springer, pp.~219--234.

\bibitem{schewe:ejc2011}
{\sc Schewe, K.-D., and Wang, Q.}
\newblock Partial updates in complex-value databases.
\newblock In {\em Information and Knowledge Bases XXII}, A.~Heimb\"{u}rger
  et~al., Eds., vol.~225 of {\em Frontiers in Artificial Intelligence and
  Applications}. IOS Press, 2011, pp.~37--56.

\bibitem{zoubeyr:sttt2017}
{\sc Zoubeyr, F., {A\"{\i}t Ameur}, Y., Ouederni, M., and Tari, K.}
\newblock A correct-by-construction model for asynchronously communicating
  systems.
\newblock {\em {STTT} 19}, 4 (2017), 465--485.

\end{thebibliography}

\end{document}